\newtheorem{theorem}{Theorem}
\newtheorem{lemma}{Lemma}
\newenvironment{proof}%
{\begin{trivlist}\item[\hspace*{\labelsep}{\it Proof.\/}]}%
{\hfill$\Box$\end{trivlist}}
\newtheorem{defi}{Definition}
\newcommand{\head}[1]
 {\markright{\hbox to 0pt{\vtop to 0pt{\hbox{}\vskip 3mm \hrule
 width  \textwidth \vss} \hss}{\sc #1}}}
\begin{document}
\title{A Truthful FPTAS Mechanism for Emergency Demand Response in Colocation Data Centers}

\author{\authorblockN{Jianhai Chen\authorrefmark{1}, Deshi Ye\authorrefmark{1}, Shouling Ji\authorrefmark{1}\authorrefmark{2}, Qinming He\authorrefmark{1}, Yang Xiang\authorrefmark{3}, Zhenguang Liu\authorrefmark{4}}
\authorblockA{\authorrefmark{1}  Institute of Cyberspace Research and College of Computer Science and Technology, Zhejiang University, Hangzhou 310027.\\
\{chenjh919, yedeshi, hqm, sji\}@zju.edu.cn
}
\authorblockA{\authorrefmark{2} Alibaba-Zhejiang University Joint Institute of Frontier Technologies, Hangzhou 310027, China.}
\authorblockA{\authorrefmark{3}Swinburne Research, Swinburne University of Technology, Victoria 3122 Australia.\\
yxiang@swin.edu.au 
}
\authorblockA{\authorrefmark{4}Dept. of Computer Science, 
National University of Singapore, Computing 1, 13 Computing Drive Singapore 117417.\\
liuzhenguang2008@gmail.com
}
\thanks{Remark: this paper is published as main conference paper in IEEE INFOCOM 2019.}
}

\maketitle
\begin{abstract}
Demand response (DR) is not only a crucial solution to the demand side management but also a vital means of electricity market in maintaining power grid reliability, sustainability and stability. DR can enable consumers (e.g. data centers) to reduce their electricity consumption when the supply of electricity is a shortage. The consumers will be rewarded in the case of DR if they reduce or shift some of their energy usage during peak hours. Aiming at solving the efficiency of DR, in this paper, we present MEDR, a mechanism on emergency DR in colocation data center. First, we formalize the MEDR problem and propose a dynamic programming to solve the optimization version of the problem. We then design a deterministic mechanism as a solution to solve the MEDR problem. We show that our proposed mechanism is truthful. Next, we prove that our mechanism is an FPTAS, {\it i.e.}, it can be approximated within $1 + \epsilon$ for any given $\epsilon > 0$, while the running time of our mechanism is polynomial in $n$ and $1/\epsilon$, where $n$ is the number of tenants in the datacenter. Furthermore, we also give an auction system covering the efficient FPTAS algorithm as bidding decision program for DR in colocation datacenter. Finally, we choose a practical smart grid dataset to build a large number of datasets for simulation in performance evaluation. By evaluating metrics of the approximation ratio of our mechanism, the non-negative utility of tenants and social cost of colocation datacenter, the results demonstrate the effectiveness of our work.

\end{abstract}

\begin{IEEEkeywords}
Emergency Demand Response; Knapsack Problem; Mechanism Design; Colocation Data Center; Auction
\end{IEEEkeywords}
\section{Introduction}
Demand response~(DR) programs are widely adopted in many colocation data centers for improving the efficiency of power grids~\cite{Kamyab2016Demand,Moghaddam2017On}. 
It attempts to adjust the demand for power instead of adjusting the supply. In today's power grid, DR is a technique for regulating the energy consumption over time, which is one of the major reliability impacts for smart grids~\cite{Wei2017Energy}. 
On the consumer side, DR is commonly utilized as a powerful tool for employing flexibility of using electricity in response to supply-demand conditions\cite{zhang2015infocom}. In smart grid market when electricity price rises or the system reliability are threatened, the electricity supplier will firstly deliver the notice of direct compensation, of inductively reducing power load or signal of power price rise to electricity consumers. The consumers will change their intrinsic power using mode so as to meet the demand of electricity supply, reducing or passing a special period of time of power load, ensuring the stability of the power grid and restraining the rise of electricity price. In addition, DR is an important tool of demand-side management~\cite{Jacquot2018Demand} which refers to the way that countries use policy measures to guide power users to reduce electricity at peak time, use electricity in low valley, improve power supply efficiency and optimize the usage of electricity. When it comes to the emergency demand of using electricity, the demand-side management will be required to start an immediate response or incentive mechanism of the user report electricity, and declare the amount of electricity and the price of electricity. 

However, there are few efficient mechanisms and systems to support efficient power management in current power grid and many colocation data centers, leading to high power cost and low efficiency~\cite{Jacquot2018Demand}. For example, to reduce peak demand in a power grid, the DR is usually implemented manually by sending signals to large consumers, such as data centers. 
Besides, it is worthy to note that colocation of data centers is quite popular now. According to the website\footnote{Data collected from http://www.datacentermap.com/ in Jan 7, 2016.}, there are $3,775$ colocation data centers from $112$ countries. A colocation is the third-party leased placement that provides physical homes for many data centers, and provides lots of services such as fast Internet, stable power supply and cooling. 
Though colocation of data centers provided a nice solution for those enterprise tenants, it consumed huge electricity. As pointed out in~\cite{zhou2015demand}, $91$ billion kilowatt-hours of electricity was consumed in U.S. in 2013, and it emitted around $97$ million metric tons of carbon pollution in that year. On the other hand, it is possible to close or migrate some tasks in a large data center such that some computing servers can be shutdown. This makes possible for data centers to be participant in DR. In case of emergency (for example, earthquake or extreme bad weather) or reaching the capacity of a grid, it requires to implement the DR.

Therefore, to improve the efficiency, an incentive reverse auction mechanism is employed to motivate power users to participate in DR activities. The basic process of the auction is as follows. The electricity DR and electricity quota are issued by the special power management department. After the power users receive the signal, the electricity price can be reported in the form of bidding. Then the bidding system runs a bidding mechanism to choose the group of users, in which users will perform the restricted electricity consumption behavior in the DR and achieve corresponding compensations.

In this work, we aim to design an efficient mechanism to solve the DR problem called MEDR. Actually, there are several challenges in solving MEDR. It is worth to note that we could apply the VCG mechanism~\cite{vickrey1961counterspeculation,clarke1971multipart,groves1973incentives} to MEDR problem, and a deterministic truthful mechanism was obtained. However, the optimization problem of MEDR is NP-hard, since it generalizes the min-knapsack problem. On the other hand, approximation algorithms for MEDR might not be truthful. To the best of our knowledge, Zhang et al.~\cite{zhang2015infocom} was the first one to study approximated truthful mechanisms for MEDR problem. 
They provided a 2-approximated mechanism with truthful in expectation. The main technique of their work is to present a 2-approximated algorithm, and then turn the approximation algorithm into a mechanism with truthful in expectation, while keeping the approximation ratio of 2. The framework of their work is based on a convex decomposition technique~\cite{lavi2011truthful}, which will transfer an approximation algorithm into a truthful randomized mechanism.

Our main contribution is to propose a deterministic truthful mechanism with FPTAS approximated. To the best of our knowledge, we are not aware of other deterministic approximated truthful mechanisms. The main technique of our work is to design a monotone algorithm based on the framework of Archer and Tard\"os~\cite{archer2001truthful}, which is the key idea of designing deterministic truthful mechanism for one parameter. In detail, we first design a dynamic programming for the optimization version of our studied problem, and then applying some rounding technique such that our design dynamic programming satisfies the monotone property. Furthermore, in our mechanism, both the cost $c_i$ and the size $e_i$ of each agent $i$ are private information. We also implement all the algorithms in a reverse auction system and a simulation tool for performance evaluation. The system tool can be used for colocation data centers or can be extended to be adopted to some other applications such as auctions in smart grid. Extensive experiments are presented to evaluate the effectiveness of our method. Empirical results show that our mechanism achieves nearly optimal solution.

The rest of this paper is organized as follows. In Section~\ref{sec:ps}, we state the mechanism design problem called MEDR. In Section~\ref{sec:sm}, we propose a dynamic programming to solve the optimization version of the studied problem. Then we address an FPTAS deterministic truthful mechanism. In Section~\ref{sec:exp}, we implement a reverse auction system tool and all the algorithms. Extensive simulation experiments are taken to evaluate the effectiveness of our method. The related works are presented in Section~\ref{sec:rw}, and concluding remarks are given in Section~\ref{sec:conc}.

\section{Problem statement}\label{sec:ps}

In this section we address the statement of a mechanism design problem on DR for colocation data center.

We study a mechanism design problem MEDR that arises in data center demand response. There are $n$ tenants in a colocation data center.  Each tenant $i \in \{1, 2, \ldots, n\}$ subscribes a certain amount of power supply from the colocation operator. 
In the event of Emergency Demand Response (EDR), the colocation operator is required to reduce $W$ amount of energy. 
Given power-based contracts, tenants may not have incentive to participate in EDR unless they are awarded. Even if some tenants are interested in EDR, their reduction may not reach the reduction target $W$. In case of not reaching the target, the colocation operator can use backup energy storage (BES) to fulfill the shortage of EDR target. Let $y$ be the amount of grid-power demand reduction due to the usage of BES, and $\alpha$ be the cost of BES usage per kWh.

Each tenant $i$ submits a bid with two parameters $(s_i, b_i)$, 
where $s_i$ is the amount of planned energy reduction and $b_i$ is the claimed cost due to such a reduction. 
However, each tenant $i$ has its own true type $(e_i, c_i)$, where $c_i$ is the cost due to a reduction of $e_i$ energy. The value $e_i$ and $c_i$ are only known to the tenant $i$. Moreover, each tenant is {\em single-minded}~\cite{lehmann2002truth} such that each tenant is restricted to one single bid.
Every tenant has freedom to choose participation in this EDR or not. 
If a tenant is not willing to participate in this EDR, we can suppose its bid is $(0, 0)$. 
Let $B =\{(s_1, b_1), \ldots, (s_n, b_n)\}$ be the set of bids by the  $n$ tenants.
Based on this bidding $B$, the colocation operator will pay money $P_i(B)$ to each tenant $i$ to encourage their participate in this EDR.
Let $U_i(B) = P_i(B) - c_i$ be the utility of tenant $i$ according to the biddings of $B$. 
Clearly each tenant $i$ attempts to maximize his/her utility. 
According to~\cite{zhang2015infocom}, the power consumption in colocation data center consists of both the energy consumption of tenants and also consumption of management such as cooling. 
There is a ratio called Power Usage Effectiveness (PUE) $\gamma$ between the total energy consumption to the energy consumed by tenants,  which typically ranges from 1.1 to 2.0.

A tenant is a {\em winner} if her/his bidding is successful. Let $N$ be the set of winners. To meet the energy reduction target $W$, we require that $y + \gamma \sum_{i\in N} s_i  \geq W$.
The social cost of the colocation operator is $\alpha y + \sum_{i \in N} P_i(B)$. The social cost of tenants is $\sum_{i \in N} (c_i - P_i(B))$. Thus, the total social cost is equivalent to aggregate tenant cost due to energy reduction plus the operator's cost for using BES, i.e., $\alpha y + \sum_{i\in N} c_i$. 
The goal of the mechanism design is to minimize the total social cost, meanwhile no tenant can benefit by proposing a false bidding. The optimization version of this problem can be formulated as an integer programming. Let $x_i = 1$ if tenant $i$ is a winner, i.e. $i \in N$, otherwise $x_i =0$. 

\begin{equation}\label{eq:prob}
    \begin{array}{rcl}
       min \ \  \ \ \alpha y + \sum_{i=1}^{n} x_i  c_i    
    \end{array}
\end{equation}
subject to
\begin{equation}\label{equ:eigen11}
    \begin{array}{rcl}
    y + \gamma \sum_{i=1}^{n} x_i s_i  \geq W       
    \end{array}
\end{equation}
\begin{equation}\label{equ:eigen12}
    \begin{array}{rcl}
    x_{i}= \{1, 0\},   \forall  i \in \{1, 2, \ldots, n\}      
    \end{array}
\end{equation}

The studied MEDR problem is closely related to knapsack auction problems. 
According to objective functions, we define two types of mechanism design for knapsack problems. In {\em max-knapsack} problem, each agent has a private valuation for having his/her objective in the knapsack. The problem is to find an allocation of the agents without exceeding the capacity of the knapsack as so to maximize the sum of each agent's value.In {\em min-knapsack} problem, each agent has a private cost for having
his/her item in the knapsack. The problem is to find an allocation to cover the knapsack, while the sum of agents' cost is minimized. 

For any instance $I$, we define by $\mathcal{C}(\mathcal{M}(I))$ the {\em social cost} of the mechanism $\mathcal{M}$, which is the total costs of tenants plus the operator's cost for using BES. 
A mechanism $\mathcal{M}$ is said to be $\rho$-approximated if $\mathcal{C}(\mathcal{M}(I)) \leq \rho \cdot \mathcal{C}(OPT(I))$, where $OPT$ is an optimal algorithm. 

Let $B_{-i} =\{B_1, \ldots, B_{i-1}, B_{i+1}, \ldots, B_n\}$ be the bids except tenant $i$'s bid.

\begin{defi}\label{def:truth}
(Truthfulness): A mechanism $\mathcal{M}$ consisting of an allocation function $\mathcal{A}$ and a payment function $\mathcal{P}$ is truthful (or strategy-proof) if for every tenant $i$ with the true cost $c_i$ cannot increase his/her utility by declaring any other cost $(s_i, b_i)$ regardless of every bidding of other agents $B_{-i}$, i.e., it satisfies 
\[
U_i((e_i,c_i), B_{-i}) \ge U_i((s_i,b_i), B_{-i}).
\]
\end{defi}
This definition implies that truthful reporting is a dominant strategy for every tenant.

\begin{defi}\label{def:ir}
(Individual rationality): A mechanism $\mathcal{M}$ is said to be individual rationality if every agent always obtains non-negative utility with bidding of the true cost, i.e., $U_i((e_i,c_i), B_{-i}) \geq 0$ for any $i$ and any $B_{-i}$.
\end{defi}

\section{Approximated Truthful Mechanism}\label{sec:sm}

In this section, we will address an approximated truthful mechanism. We present a dynamic programming to solve the optimization version of our problem MEDR optimally, and then we explore a deterministic truthful mechanism  while it is arbitrarily approximated for any given $\epsilon >0$.

\subsection{Dynamic Programming Model}\label{sec:dp}

Our dynamic programming requires to solve the min-knapsack problem as a subroutine. 
The min-knapsack problem consists in finding a subset of items, where each item $i$ has a cost $c_i$ and a size $s_i$, with the minimum cost such that the sum of their sizes is at least as large as a specified capacity. Based on the idea of the max-knapsack problem~\cite{lawler1979fast}, Tauhidul~\cite{tauhidul2009approximation} gave a dynamic programming for the min-knapsack problem. We adopt this dynamic programming in (\ref{opt:mk})~\cite{tauhidul2009approximation} as a subroutine in the following.

Let $S(i, c)$ denote a subset of $\{1, \ldots, i\}$ whose cost is exactly $c$ and whose total size is maximized. 
Let $A(i, c)$ be the size of $S(i, c)$ ($A(i, c) = 0$ if no such set exists). The recursive formula of the dynamic programming is given in (\ref{opt:mk}). In this formula $A(i, c)$ gives a tabular of an optimal value for each subproblem $(i, c)$.

\begin{eqnarray}\label{opt:mk}
A(i, c) = 
\left\{
 \begin{array}{ll}
   \max \{A(i-1, c), s_{i} + A(i-1, c - c_{i})\},  \\
     \ \ \ \ if \; c_{i} \leq c \\
    A(i-1, c), \ \ \ \   otherwise
 \end{array}
\right.
\end{eqnarray}

In the following, we design the dynamic programming for our MEDR problem based on the recursive function (\ref{opt:mk}).
The Algorithm~\ref{alg:dyedr} (Algorithm $DOPT(I)$) gives the details of the dynamic programming for our problem MEDR. 

\begin{algorithm}
\caption{Algorithm $DOPT(I)$: Dynamic Programming for MEDR}\label{alg:dyedr}
\KwIn{The set of tenants $I$, and demand capacity $W$.}
Run the dynamic programming based on the formula (\ref{opt:mk}) for the input $I$, and obtain $A(i, c)$ for each $(i, c)$, where $1\le i\leq n$ and $0\le c \le \sum_i c_i$;

\For{each $(i, c)$}{
  \eIf{$\gamma A(i, c) < W$}{
     $y(i, c) = \alpha (W - \gamma A(i, c)) + c$
  }
  {
  $y(i, c) = c$
  }
  }
\KwOut{Return $\min_{(i, c)} y(i, c)$}
\end{algorithm}


\begin{theorem}\label{theo:dyopt}
The dynamic programming $DOPT(I)$ produces an optimal solution for any instance of tenants $I$ with demand request $W$, and unit cost of BES $\alpha$. The running time is pseudo-polynomial, which is $O(n^2 c_{\max})$, where $c_{\max}$ is the largest cost.
\end{theorem}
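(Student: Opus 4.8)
The plan is to establish correctness and the running-time bound separately. For correctness, the first step is to record the invariant of the min-knapsack subroutine~(\ref{opt:mk}): $A(i,c)$ equals the maximum total size $\sum_{j\in S}s_j$ over all subsets $S\subseteq\{1,\dots,i\}$ with $\sum_{j\in S}c_j=c$, where a cell admitting no such subset is read as $-\infty$ (the paper's choice of $0$ for such cells is equivalent for our purposes). This is the familiar knapsack induction on $i$: a subset realizing $A(i,c)$ either omits item $i$, giving size $A(i-1,c)$, or contains it, and then deleting $i$ leaves an optimal subset for $(i-1,c-c_i)$, giving size $s_i+A(i-1,c-c_i)$; taking the larger option (the second only when $c_i\le c$) reproduces~(\ref{opt:mk}), with the base case $A(0,\cdot)$ immediate.

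The second, and main, step is to prove $\min_{(i,c)}y(i,c)=\opt$, i.e.\ that sweeping the table solves the integer program (\ref{eq:prob})--(\ref{equ:eigen12}). For the inequality ``$\le$'': since $\alpha>0$, an optimal solution sets $y$ to the least feasible BES amount, so if $N^*$ is its winner set and $c^*=\sum_{i\in N^*}c_i$, then $\opt=\alpha\max\{0,\,W-\gamma\sum_{i\in N^*}s_i\}+c^*$; by the invariant $A(n,c^*)\ge\sum_{i\in N^*}s_i$, and since $\alpha\max\{0,\,W-\gamma(\cdot)\}$ is nonincreasing in the achieved reduction, $y(n,c^*)\le\opt$. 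For the inequality ``$\ge$'': each feasible cell $(i,c)$ induces the solution ``$x_j=1$ exactly for $j$ in a subset attaining $A(i,c)$, and $y=\max\{0,\,W-\gamma A(i,c)\}$'', which satisfies~(\ref{equ:eigen11})--(\ref{equ:eigen12}) and has objective exactly $y(i,c)$, whence $y(i,c)\ge\opt$. The two inequalities give that the algorithm returns $\opt$, and an optimal winner set is obtained by the usual back-tracking through the table.

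For the running time, the costs are taken to be integers (as the dependence of the bound on $c_{\max}$ presupposes). The table ranges over $1\le i\le n$ and $0\le c\le\sum_i c_i\le n\,c_{\max}$, so it has $O(n^2 c_{\max})$ cells; each is filled in $O(1)$ time via~(\ref{opt:mk}), and forming every $y(i,c)$ and taking the final minimum cost $O(1)$ per cell, for a total of $O(n^2 c_{\max})$ --- pseudo-polynomial, being polynomial in the magnitudes of the costs rather than in their encoding length. I expect the only genuinely delicate point to be the reduction in the second paragraph: one must notice that fixing the winner set forces the optimal $y$, and that among winner sets of a fixed total cost a larger reduced energy is never worse, so the single ``cost'' axis of the knapsack table is exactly the quantity worth enumerating; the bookkeeping for cells with no underlying subset also needs the small remark above, so that such cells cannot pull the minimum below $\opt$.
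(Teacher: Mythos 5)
Your proof is correct and takes essentially the same route as the paper's: it rests on the exact-cost knapsack invariant for $A(i,c)$, enumerates the tenant-cost value $c$ of an optimal solution, and accounts for the residual BES cost $\alpha\max\{0,\,W-\gamma A(\cdot,c)\}$, with the same $O(n^2 c_{\max})$ table-size count. The only difference is that you spell out both inequalities (in particular that every cell $(i,c)$, including cells with no realizing subset, induces a feasible solution of cost at least $\opt$), a direction the paper leaves implicit; this is a completeness refinement rather than a different argument.
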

\begin{proof}
Any optimal solution consists of two parts, one is covered by BES, and another is covered by items from $I$. Let $p, q$ be the cost due to tenants $I$ and BES, respectively. Let $c_{\max} =\max_i c_i$ be the largest cost among all tenants.

In the dynamic programming we iterate all possible $(i, c)$, where $c \in C =\{0, 1, 2, \ldots, nc_{\max}\}$. Any cost due to tenants $I$ is in $C$, hence $p \in C$. Note that 
the dynamic programming $A(i, p)$ provides the maximize size whose cost is exactly $p$. If $\gamma A(n, p)$ is less than $W$, and we require at least $W - \gamma A(n, p)$ BES to cover the knapsack in the optimal. Therefore, $p + q \geq \alpha(W - \gamma A(n, p)) + p$. If $\gamma A(n, p) \geq W$, then $q=0$.
These two cases are both covered in the dynamic programming, which implies that the dynamic programming outputs an optimal solution.

The running time of dynamic programming is $O(n^2 c_{\max})$, since $i \leq n$, and $c \leq nc_{\max}$, and the running time is bounded by the iterative function of $(i, c)$.
\end{proof}

\subsection{Monotone FPTAS Model}\label{sec:fptas}

%


Motivated by the truthful mechanism for max-knapsack problem~\cite{briest2011approximation}, we will propose a deterministic truthful mechanism. To keep the truthful property, the idea of our mechanism is to give a monotone algorithm. To obtain an FPTAS, we need to design a monotone algorithm whose approximation ratio is arbitrarily close to 1. The detailed algorithm is given in Algorithm~\ref{alg:FPTAS}, which  iteratively calls a subroutine Algorithm~\ref{alg:AK} (Algorithm~$A_r(k,I)$). The motivation of Algorithm~\ref{alg:AK} is to keep the truthful, in which the rounding of each item is independent on the bidding of all tenants. 

\begin{algorithm}
\caption{Algorithm~$A_r(k,I)$}\label{alg:AK}
\KwIn{Given parameter $k$, and the instance $I = (s_1, c_1), \ldots, (s_n, c_n)$.}

Let $a_k = \frac{\epsilon 2^k}{n +1}$;

Let $T(k)$ be the subset of items with cost at most of $2^k$; We construct a new instance $I^\prime$ based on $T(k)$, which is identical to $T(k)$, but the cost of each item $c^\prime$ is given as below.

\For{$i \in T(k)$} {
  $c_i^\prime = \lfloor \frac{c_i}{a_k} \rfloor$
  }


Run the dynamic programming $DOPT(I^\prime)$for the items in $T(k)$ with cost $c_i^\prime$, and obtain $A(i, c^\prime)$;

\For{each $(i, c^\prime)$}{
  \eIf{$\gamma A(i, c^\prime) < W$}{
     $y(i, c^\prime) = \lfloor \frac{\alpha (W - \gamma A(i, c^\prime))}{a_k} \rfloor + c^\prime$
  }
  {
  $y(i, c^\prime) = c^\prime$
  }
  }
\KwOut{Return $\min_{(i, c^\prime)} y(i, c^\prime)$}

\end{algorithm}


\begin{algorithm}
\caption{Monotone FPTAS $A_{FPTAS}$}\label{alg:FPTAS}
\KwIn{Given $\epsilon >0$, and the instance $I$.}

Let $best \gets \infty$, and $c_{\max} = \max_i c_i$.

\For{$k \gets 1$ \textbf{to} $\log c_{\max}$} {
  $S^\prime(k) \gets A_{r}(k, I)$; /* call Algorithm~\ref{alg:AK} (Algorithm $A_r(k, I)$) */

  \If{$S^\prime(k) < best$}{
  $best \gets S^\prime(k)$

  $\bar{S} \gets$ the subset items that contained in the solution of $S^\prime(k)$
  }
  }
\KwOut{$\bar{S}$, and use BES $W - \gamma \sum_{i \in \bar{S}} s_i$}
\end{algorithm}

\begin{lemma}\label{lem:fptas}
For any $\epsilon >0$, Algorithm $A_{FPTAS}$ has approximation ratio of $1 + \epsilon$, and its running time is polynomial in $1/\epsilon, n, \log c_{max} $.
\end{lemma}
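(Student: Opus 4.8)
The plan is to prove the two assertions—polynomial running time and the $1+\epsilon$ guarantee—separately, and for the ratio to exhibit the single iteration $k=k^{*}$ of $A_{FPTAS}$ in which the rounding error is provably controlled against the true optimum that $DOPT$ computes exactly.

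First I would bound the running time. Inside the call $A_{r}(k,I)$, every item of $T(k)$ has cost at most $2^{k}$, so its rounded cost $c_{i}^{\prime}=\lfloor c_{i}/a_{k}\rfloor$ is at most $2^{k}/a_{k}=(n+1)/\epsilon$. Hence the largest rounded cost is $O(n/\epsilon)$ and $\sum_{i}c_{i}^{\prime}=O(n^{2}/\epsilon)$, so by Theorem~\ref{theo:dyopt} the internal call $DOPT(I^{\prime})$ runs in $O\!\big(n^{2}\cdot (n/\epsilon)\big)=O(n^{3}/\epsilon)$ time, and the subsequent pass over all pairs $(i,c^{\prime})$ costs no more. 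Since $A_{FPTAS}$ performs $O(\log c_{\max})$ such calls, the total running time is $O\!\big(n^{3}\log c_{\max}/\epsilon\big)$, polynomial in $n$, $1/\epsilon$, and $\log c_{\max}$.

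For the approximation ratio, fix an optimal solution with winner set $N^{*}$, BES usage $y^{*}=\max\{0,\,W-\gamma\sum_{i\in N^{*}}s_{i}\}$, and social cost $\mathcal{C}(\opt)=\alpha y^{*}+\sum_{i\in N^{*}}c_{i}$, which equals the value of $DOPT$ by Theorem~\ref{theo:dyopt}. If $N^{*}=\emptyset$, then $\mathcal{C}(\opt)=\alpha W$ and the entry with $c^{\prime}=0$ already yields cost exactly $\alpha W$, so assume $N^{*}\neq\emptyset$ and put $c^{*}_{\max}=\max_{i\in N^{*}}c_{i}\le\mathcal{C}(\opt)$. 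Let $k^{*}$ be the smallest index in the loop range with $2^{k^{*}}\ge c^{*}_{\max}$, so $2^{k^{*}}\le 2c^{*}_{\max}$ and every tenant of $N^{*}$ lies in $T(k^{*})$. In the rounded instance of $A_{r}(k^{*},I)$ the set $N^{*}$ has rounded cost $p^{\prime}=\sum_{i\in N^{*}}\lfloor c_{i}/a_{k^{*}}\rfloor$ and (unrounded) size $\sigma^{*}=\sum_{i\in N^{*}}s_{i}$, hence $A(n,p^{\prime})\ge\sigma^{*}$ in the DP table; substituting $p^{\prime}$ into the definition of $y(n,p^{\prime})$, using $W-\gamma A(n,p^{\prime})\le y^{*}$ and $a_{k^{*}}p^{\prime}\le\sum_{i\in N^{*}}c_{i}$, gives $S^{\prime}(k^{*})\le y(n,p^{\prime})\le\mathcal{C}(\opt)/a_{k^{*}}$.

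It remains to translate this bound on the rounded objective back to the true social cost of the recovered subset. If the winning entry of $A_{r}(k^{*},I)$ corresponds to $\bar{S}\subseteq T(k^{*})$ with $\sum_{i\in\bar{S}}c_{i}^{\prime}=\hat{c}^{\prime}$ and $\sum_{i\in\bar{S}}s_{i}=A(\hat{i},\hat{c}^{\prime})$, then since only costs were rounded (downward) we have $\sum_{i\in\bar{S}}c_{i}\le a_{k^{*}}\hat{c}^{\prime}+a_{k^{*}}|\bar{S}|\le a_{k^{*}}\hat{c}^{\prime}+a_{k^{*}}n$, and the floor in the BES term of $y(\hat{i},\hat{c}^{\prime})$ loses at most one further unit of $a_{k^{*}}$; combining these, the true social cost of $\bar{S}$ together with BES $W-\gamma\sum_{i\in\bar{S}}s_{i}$ is at most $a_{k^{*}}S^{\prime}(k^{*})+a_{k^{*}}(n+1)\le\mathcal{C}(\opt)+\epsilon 2^{k^{*}}\le\mathcal{C}(\opt)+2\epsilon c^{*}_{\max}\le(1+2\epsilon)\mathcal{C}(\opt)$, using $a_{k^{*}}(n+1)=\epsilon 2^{k^{*}}$ and $2^{k^{*}}\le 2c^{*}_{\max}\le 2\mathcal{C}(\opt)$. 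As $A_{FPTAS}$ keeps the best entry over all $k$, its output is at least this good, and invoking the algorithm with parameter $\epsilon/2$ yields the stated ratio $1+\epsilon$. I expect the main obstacle to be exactly this last conversion: one must simultaneously account for the two rounding losses—the per‑item floor on costs ($a_{k^{*}}n$ in aggregate) and the single floor on the BES term ($a_{k^{*}}$)—in the correct direction, and must pin down the indexing of $k^{*}$ (the boundary case $k^{*}=1$ and the empty‑winner case included) so that the favorable iteration is in fact executed by the loop.
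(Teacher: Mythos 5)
Your proposal is correct and follows essentially the same route as the paper's proof: bound the rounded costs by $(n+1)/\epsilon$ to get the $O(n^{3}\log c_{\max}/\epsilon)$ running time, then focus on the iteration $k^{*}$ determined by the largest cost in the optimal solution, compare the rounded DP value against the rounded cost of the optimal set, and pay a total unrounding loss of $(n+1)a_{k^{*}}=\epsilon 2^{k^{*}}\le 2\epsilon\cdot OPT$. The only difference is presentational: you split the paper's single chain of inequalities into two stages and make explicit the final rescaling (running with $\epsilon/2$) that the paper leaves implicit after arriving at $(1+2\epsilon)OPT$.
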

\begin{proof}
Let $c_q$ be the largest cost among the items in an optimal algorithm to cover the knapsack.
Define $k^*$, such that 
\[
 2^{k^* -1} < c_q \leq 2^{k^*}.
\]

Denote $O^*$ to be the subset of items in the optimal solution. Let $y^*$ be the size BES used in the optimal solution. Let $O^*(R) = O^* \bigcup \{R\}$, where $R$ is a virtual item with size $y^*$ and cost $\alpha y^*$. Let $OPT$ be the cost of the optimal solution. We have $OPT \geq c_q$.

Note that in $T(k^*)$ as denoted in the algorithm $A_r(k,I)$, we have $O^* \subseteq T(k^*)$. 
Let $\bar{S}$ be the subset of items returned by the algorithm $A_r(k,I)$ with $k^*$ as the parameter, and let $(\bar{i}, \bar{c})$ be the pair of values that reaches the minimum of $A_r(k,I)$. 

Let $O^\prime$ be the subset of items with costs rounded by $2^{k^*}$ from $O^*$. Let $R^\prime$ be a virtual item with size $y^*/a_{k^*}$.

Let $ALG$ be the final cost incurred by the algorithm $A_{FPTAS}$, 
we can use the following inequalities to approximate the cost by the algorithm with the optimal solution. 

\begin{eqnarray*}
ALG & = & \sum_{i \in \bar{S}} c_i + \max(\alpha (W - \gamma A(\bar{i}, \bar{c})), 0) \\
 & \leq & \sum_{i \in \bar{S}} c_i + \max(\lfloor \frac{\alpha (W - \gamma A(\bar{i}, \bar{c}))}{a_k^*} \rfloor, 0) a_{k^*}  + a_{k^*}\\
    & \leq & \sum_{i \in \bar{S}}(c_{i}^\prime \cdot a_{k^*} + a_{k^*}) + \\
    & & \max(\lfloor \frac{\alpha (W - \gamma A(\bar{i}, \bar{c}))}{a_k^*} \rfloor, 0) \cdot a_{k^*}  + a_{k^*}\\
    & \leq & \sum_{i \in \bar{S}}c_{i}^\prime \cdot a_{k^*} + \max(\lfloor \frac{\alpha (W - \gamma A(\bar{i}, \bar{c}))}{a_k^*} \rfloor, 0) \cdot a_{k^*} \\
    & &  + (n+ 1) a_{k^*}\\ 
    & \leq & \sum_{i \in O^\prime \bigcup \{R^\prime\}}c_{i}^\prime \cdot a_{k^*} + (n+1) a_{k^*} \\
    & \leq & \sum_{i \in O^* \bigcup \{R\}}c_{i} + (n+1) a_{k^*} \\
    & \leq & OPT + \epsilon 2^{k^*} \\
    & \leq & (1+2\epsilon) OPT.
\end{eqnarray*}

The running time is $poly(1/\epsilon, n, \log c_{\max})$.
In algorithm $A_r(k,I)$, the largest cost of $T(k)$ is $2^k$, the running time of dynamic programming is $O(n^3/\epsilon)$. The total running time of $A_{FPTAS}$ is $O(\frac{1}{\epsilon}n^3 \log c_{max})$.
\end{proof}

\subsubsection{Monotone}

A declaration $B_i^\prime =(s_i^\prime, b_i^\prime)$ is said to be a {\em higher declaration} than the bidding $B_i = (s_i, b_i)$ if $s_i^\prime \geq s_i$ and $b_i^\prime \leq b_i$, i.e. $B_i \preceq B_i^\prime$. A bid $(s_i, b_i)$ is said to be a {\em winner declaration} if this item is selected in the knapsack.

\begin{defi} (Monotone)
We say that an algorithm $A$ is monotone if, for any bidder $(s_i, b_i)$ is a winning declaration then any higher declaration also wins. 
\end{defi}

Bitonic was introduced by Mu'Alem and Nisan~\cite{mu2008truthful} for maximize problems, such as multi-unit auction, and it was generalized by Briest, Krysta, and V{\"o}cking~\cite{briest2011approximation}. 

In this work, we apply the technique of bitonic to the minimize problems. 

\begin{defi} (Bitonic)
Given a function $f: \mathcal{A}^n \rightarrow $, a monotone algorithm $A$ is bitonic with respect to the function $f$ if for any agent $i$, the following hold:
\begin{enumerate} 
\item If $i \in A(B)$, then $f(A(B_i, B_{-i})) \geq f(A(B_i^\prime, B_{-i}))$ for any 
$B_i  \preceq B_i^\prime$.
\item If $i \not\in A(B)$, then $f(A(B_i, B_{-i})) \geq f(A(B_i^\prime, B_{-i}))$ for any 
$B_i^\prime  \preceq B_i$.
\end{enumerate}
\end{defi}

Intuitively, a monotone algorithm $A$ is bitonic with respect to $f$ if $f$ is a monotone non-decreasing function of each agent's valuation while she is not selected for the solution, but becomes monotone non-increasing after she is selected for the solution. In this work, the function $f$ is the objective function, i.e., the social welfare. The bitonic is indeed required to guarantee the monotone for compositions of algorithms.  

\begin{algorithm}
\caption{$MIN(A_1, A_2)$ Operator}\label{alg:min2}
\KwIn{Bidding $B$}
Run the algorithm $A_1$ and $A_2$; 

Let $sw_{A_1}(B)$ and $sw_{A_2}(B)$ be the social welfare of Algorithm $A_1$ and $A_2$, respectively.

\eIf {$sw_{A_1}(B) \leq sw_{A_2}(B)$} { 
  return $A_1(B)$;}
  {
   return $A_2(B)$.
  }
\end{algorithm}

\begin{lemma}\label{lem:minop}
Let $A_1$ and $A_2$ be two monotone bitonic allocation algorithms. 
Then, $M = MIN(A_1,A_2)$ is a monotone bitonic allocation algorithm.
\end{lemma}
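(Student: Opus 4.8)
The plan is to establish the two assertions about $M = MIN(A_1,A_2)$ — monotonicity, and bitonicity with respect to the objective $f = sw$ — separately, in each case reducing the claim for $M$ to the already-assumed properties of $A_1$ and $A_2$ via a short case split on which branch $M$ returns. Fix an agent $i$ and the other agents' bids $B_{-i}$; write $B = (B_i, B_{-i})$ and $B' = (B_i', B_{-i})$, and let $j \in \{1,2\}$ be the index of the branch actually output by $M$ on input $B$, so that $f(A_j(B)) \le f(A_{3-j}(B))$ — the inequality being strict when $j = 2$, since ties are resolved in favour of $A_1$ — and $f(M(B)) = f(A_j(B))$. Note that the function against which $A_1,A_2$ are bitonic is exactly the function $M$ uses for its comparison, which is what makes the reduction possible.

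For monotonicity, assume $i \in M(B)$, hence $i \in A_j(B)$, and take any higher declaration $B_i' \succeq B_i$. Monotonicity of $A_j$ gives $i \in A_j(B')$, so it suffices to show $M$ returns branch $j$ on $B'$. If $i \in A_{3-j}(B')$ this is immaterial, since then $i$ lies in whichever allocation $M$ outputs at $B'$; so assume $i \notin A_{3-j}(B')$, and observe that then also $i \notin A_{3-j}(B)$, for otherwise monotonicity of $A_{3-j}$ would force $i \in A_{3-j}(B')$. Now I would chain four inequalities: $f(A_j(B')) \le f(A_j(B))$ by the selected-agent case of bitonicity of $A_j$; $f(A_j(B)) \le f(A_{3-j}(B))$ because $M$ chose branch $j$ at $B$; and $f(A_{3-j}(B)) \le f(A_{3-j}(B'))$ by the unselected-agent case of bitonicity of $A_{3-j}$, applied with $B_i \preceq B_i'$ and reference point $B'$. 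Hence $f(A_j(B')) \le f(A_{3-j}(B'))$, and when $j = 2$ the middle inequality is strict so the whole chain is strict; either way the tie-breaking convention makes $M$ return $A_j(B') \ni i$, so $i \in M(B')$.

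For bitonicity of $M$ the two cases are easier and need no four-term chain. If $i \in M(B)$ and $B_i' \succeq B_i$, then $i \in A_j(B)$ and the selected-agent case of bitonicity of $A_j$ gives $f(A_j(B')) \le f(A_j(B))$, whence $f(M(B')) = \min\{f(A_1(B')), f(A_2(B'))\} \le f(A_j(B')) \le f(A_j(B)) = f(M(B))$. If instead $i \notin M(B)$ and $B_i' \preceq B_i$, then $i \notin A_j(B)$ (the returned branch), so the unselected-agent case of bitonicity of $A_j$ gives $f(A_j(B')) \le f(A_j(B))$, and again $f(M(B')) \le f(A_j(B')) \le f(A_j(B)) = f(M(B))$. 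Combined with the monotonicity established above, this shows $M$ is a monotone bitonic allocation algorithm.

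The only delicate point, and where I expect to have to take care, is the boundary behaviour inside the monotonicity argument: ruling out that a winning agent $i$ drops out of the output of $M$ when its declaration is raised, because the minimizing branch switches from $j$ to $3-j$ while $i \notin A_{3-j}$. Precluding this is precisely the role of the four-inequality chain, which uses bitonicity of \emph{both} $A_1$ and $A_2$ together with the fixed tie-breaking rule; the rest of the proof is routine bookkeeping.
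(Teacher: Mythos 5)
Your proof is correct and takes essentially the same approach as the paper: the paper's proof is just a one-line appeal to Mu'Alem and Nisan's Theorem 2 for the MAX operator, asserting it ``easily extends'' to MIN, and your argument is precisely that extension carried out in full (the case split on the returned branch, the four-inequality chain using bitonicity of both algorithms, and the tie-breaking in favour of $A_1$). It is consistent with the paper's definitions of monotone and bitonic and supplies the details the paper leaves implicit.
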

\begin{proof} This can be easily extended from the proof of the Theorem 2 in~\cite{mu2008truthful}, which was designed for the MAX operator. 
\end{proof}

\begin{lemma}\label{lem:akm}
Algorithm $A_r(k,I)$ is monotone and bitonic with respect to the objective function.
\end{lemma}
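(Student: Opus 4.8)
The plan is to establish monotonicity and bitonicity of $A_r(k,I)$ separately, exploiting the crucial design feature that the rounding parameter $a_k = \epsilon 2^k/(n+1)$ depends only on $k$ and not on any bid. First I would fix an agent $i$ and the other bids $B_{-i}$, and argue monotonicity: suppose the bid $(s_i,b_i)$ is a winning declaration, i.e.\ item $i$ is included in the set that attains $\min_{(i',c')} y(i',c')$ inside $A_r(k,I)$. Consider a higher declaration $(s_i',b_i')$ with $s_i' \ge s_i$ and $b_i' \le b_i$. Two things can change: item $i$ may enter $T(k)$ if it was not there before (lowering $b_i$ can only add it, never remove it, since membership in $T(k)$ is "cost at most $2^k$"), and its rounded cost $c_i' = \lfloor b_i'/a_k\rfloor \le \lfloor b_i/a_k\rfloor$ weakly decreases while its size weakly increases. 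I would then show that in the dynamic-programming table $A(j,c')$ the entry reachable using item $i$ can only improve (larger achievable size at the same or smaller cost), so the value $y$ of the solution that uses $i$ does not increase, and since it was already minimal, $i$ remains in an optimal DP solution after a consistent tie-breaking rule. This is the standard "raising your bid can only help you win" argument for knapsack-type DPs; the only subtlety is pinning down a deterministic tie-breaking rule (e.g.\ lexicographic on the index set, preferring to include lower-indexed or the perturbed item) so that "winning stays winning" holds exactly rather than merely weakly.

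Next I would prove bitonicity with respect to the objective $f$ (the social cost / social welfare value returned). For the case $i \notin A_r(k,I)(B)$: I must show $f$ is non-decreasing as the bid moves from $B_i'$ up to $B_i$ when $B_i' \preceq B_i$ — equivalently, that making a non-selected agent's declaration worse cannot decrease the optimal value. Since $i$ is not used in the returned solution, that solution remains feasible and its value is unchanged when $i$'s bid worsens, and no new solution of smaller value can appear because worsening $i$ only shrinks the feasible set of DP states involving $i$; hence the minimum weakly increases. For the case $i \in A_r(k,I)(B)$: I must show that raising $i$'s declaration (to $B_i' \succeq B_i$) does not increase $f$. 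Here the selected solution uses item $i$ with size $s_i' \ge s_i$ and rounded cost $c_i' \le $ old rounded cost; replacing the old contribution by the new one in the same winning set yields a feasible solution of no larger $y$-value (a larger $s_i$ reduces the BES shortfall $W - \gamma A(\cdot,\cdot)$, hence reduces $\lfloor \alpha(W-\gamma A)/a_k\rfloor$; a smaller $c_i'$ reduces the cost term directly), so the new optimum is no larger. Both monotonicity and the two bitonic inequalities then follow from careful but routine monotonicity properties of the recursion (\ref{opt:mk}) and of the $y$-assignment in Algorithm~\ref{alg:AK}.

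I expect the main obstacle to be the interaction between the rounding floors and the tie-breaking needed for exact (not just weak) monotonicity. Because $c_i' = \lfloor c_i/a_k \rfloor$ is a step function, two different true costs can map to the same rounded cost, and a higher declaration with $b_i' < b_i$ but $\lfloor b_i'/a_k\rfloor = \lfloor b_i/a_k\rfloor$ produces an \emph{identical} rounded instance $I'$ — so the DP output is literally the same and "winning stays winning" is immediate there; the genuine work is when the floor actually drops. I would handle this by making the DP's argmax selection a fixed deterministic function of the rounded instance (say, among all subsets attaining $A(j,c')$, pick the one that is lexicographically smallest as a bit-vector, with the perturbed agent's bit given priority), and then verify that a strictly cheaper or strictly larger item $i$ is never dropped by this rule. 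A secondary, milder obstacle is confirming that $A(j,c')$ is itself monotone in agent $i$'s parameters (non-decreasing in $s_i$, non-decreasing as $c_i'$ decreases since a cheaper item relaxes the budget constraint implicit in the index $c'$), which is a direct induction on $j$ using (\ref{opt:mk}). Once these pieces are in place, assembling them into the two bitonic inequalities and the monotone property is mechanical, and the lemma follows.
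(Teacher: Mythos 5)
Your proof is correct and takes essentially the same route as the paper's: because $A_r(k,I)$ solves the (bid-independently) rounded instance exactly, with $a_k$ depending only on $k$, $n$, $\epsilon$ and the rounding monotone in each bid, optimality of the DP yields both monotonicity and the two bitonic inequalities, which is precisely the paper's much terser argument. Your added care about tie-breaking and the floor function goes beyond what the paper records and is welcome -- just note that a tie-breaking rule cannot give priority to ``the perturbed agent'' (the algorithm cannot know who deviated; a fixed deterministic rule is what is needed), and that in the losing-agent case your ``equivalently'' rephrasing flips the required inequality, although the equality of objectives you actually establish implies the needed direction anyway.
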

\begin{proof}
Algorithm $A_r(k,I)$ returns an optimal solution, if an agent reports a higher bidder, then the optimal algorithm will accept this item too. Suppose an agent $i$ was not selected, and any lower declaration $B_i^\prime$, if this item was accepted then the objective function shall be smaller, otherwise the objective remains, and hence the objective function is non-increasing for any lower bidders. Thus the property of bitonic follows.
\end{proof}

\begin{lemma}\label{lem:asb}
Algorithm $A_{FPTAS}$ is monotone and bitonic with respect to the objective function.
\end{lemma}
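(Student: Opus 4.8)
The plan is to show that $A_{FPTAS}$ is precisely a cascade of $MIN$ operators applied to the subroutines $A_r(1,I), A_r(2,I), \ldots, A_r(\log c_{\max}, I)$, and then invoke Lemmas~\ref{lem:akm} and~\ref{lem:minop} inductively. First I would observe that Algorithm~\ref{alg:FPTAS} scans $k$ from $1$ to $\log c_{\max}$, keeps a running best social welfare, and returns the allocation attaining the minimum; up to tie-breaking, this is exactly the output of $MIN(A_r(1,I), MIN(A_r(2,I), \ldots))$, i.e. the iterated $MIN$ over the family $\{A_r(k,I)\}_{k=1}^{\log c_{\max}}$. So the combined allocation of $A_{FPTAS}$ equals $M_{\log c_{\max}}$, where $M_1 = A_r(1,I)$ and $M_{j} = MIN(M_{j-1}, A_r(j,I))$ for $j \ge 2$.

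Next I would run the induction. By Lemma~\ref{lem:akm}, each $A_r(k,I)$ is monotone and bitonic with respect to the objective (social welfare) function, so the base case $M_1$ is monotone bitonic. For the inductive step, assume $M_{j-1}$ is monotone and bitonic; since $A_r(j,I)$ is also monotone and bitonic by Lemma~\ref{lem:akm}, Lemma~\ref{lem:minop} gives that $M_j = MIN(M_{j-1}, A_r(j,I))$ is monotone and bitonic. After $\log c_{\max}$ steps we conclude that $M_{\log c_{\max}} = A_{FPTAS}$ is monotone and bitonic with respect to the objective function, which is the claim.

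The one point that needs care — and which I expect to be the main obstacle — is the reduction of Algorithm~\ref{alg:FPTAS} to the iterated $MIN$ operator. The $MIN$ operator of Algorithm~\ref{alg:min2} compares two algorithms by their social welfare and returns the allocation of the better one, whereas $A_{FPTAS}$ updates its incumbent only on a strict improvement ($S'(k) < best$). I would point out that this is merely a fixed tie-breaking rule (prefer the smaller index $k$), and that the monotonicity and bitonicity arguments in Lemmas~\ref{lem:minop} and~\ref{lem:akm} are insensitive to the choice of a consistent tie-breaking rule, so the identification with the $MIN$ cascade is legitimate. A secondary subtlety is that the $a_k = \epsilon 2^k/(n+1)$ scaling and the set $T(k)$ of items with cost at most $2^k$ are defined independently of the bids of other tenants, so raising a bid $B_i \preceq B_i'$ only enlarges the pool $T(k)$ (or leaves it unchanged) and makes item $i$ more attractive inside the dynamic program — this is exactly what Lemma~\ref{lem:akm} already encapsulates, so no further work is required here beyond citing it.
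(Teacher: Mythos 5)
Your argument is exactly the paper's: the authors also prove this lemma by viewing $A_{FPTAS}$ as a composition of the subroutines $A_r(k,I)$ under the $MIN$ operator and invoking Lemma~\ref{lem:akm} together with Lemma~\ref{lem:minop}, though they state it in one line while you spell out the induction and the tie-breaking detail. Your more explicit write-up is correct and fills in exactly what the paper leaves implicit.
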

\begin{proof}
The lemma follows immediately according to Lemma~\ref{lem:minop} and Lemma~\ref{lem:akm}.
\end{proof}

\subsubsection{Payment}

\begin{defi}(Critical payment)\label{def:cp}
Let algorithm $A$ be a monotone algorithm, if we fix the declaration $B_{-i}$, and then for any agent $i$ and fixed bidding $s_i$, there exists a unique cost $\theta_i^A$, called {\it critical payment}, such that $\forall b_i \leq \theta_i^A$, $b_i$ is a winning declaration, and $\forall b_i > \theta_i^A$ is a losing declaration.
\end{defi}

To calculate the critical value for any agent $j$, we fix the other agents' bids, and then use a binary search on interval $[b_j, \max_j b_{j}]$ and repeatedly run the allocation algorithm $A$ to check whether the agent $j$ is selected.
 

\begin{defi}\label{defi:pay}
The payment $p^A$ associated with the monotone allocation algorithm $A$ that is based on the critical value is defined by $p_j^A = \theta _{j}^{A}$ if agent $j$ wins with allocation $Alloc_i(B) = s_i$, and $p_j^A = 0$ otherwise.
\end{defi}

A mechanism $M_A = (A, p^A)$ is {\em normalized }, if its payment $p^A$ is defined as in Definition~\ref{defi:pay}, i.e. agents that are not selected pay $0$. 
We say that algorithm $A$ is {\em exact} if $Alloc_i(B) = s_i$ or $Alloc_i(B)= \emptyset$ for each declaration $(s_i, b_i)$. 

In this work, we only consider a limited type of agent called {\em single-minded}, the cost function 
$\infty$ if $Alloc_i(B) > s_i$ and $c_i$ otherwise. That will force each agent does not over bidding his/her size if   allocation algorithm is an exact algorithm.

\begin{theorem}~\cite{briest2011approximation}\label{theo:msp}
Let $A$ be a monotone and exact algorithm for some minimization problems and single-minded agents. Then mechanism $M_A = (A, p^A )$ is truthful.
\end{theorem}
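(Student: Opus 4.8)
The plan is to prove truthfulness directly from first principles, establishing that for every agent $i$ and every fixed profile $B_{-i}$, reporting the true type $(e_i, c_i)$ is a dominant strategy, rather than invoking any truthfulness result. The backbone of the argument is the standard threshold observation made possible by the payment rule of Definition~\ref{defi:pay}: when agent $i$ wins, the payment $p_i^A = \theta_i^A$ equals the critical value, which by Definition~\ref{def:cp} depends only on $B_{-i}$ and the declared size $s_i$ and is therefore \emph{independent of the declared cost} $b_i$. Consequently, for a fixed size declaration the agent faces a binary outcome---win at a fixed payment, or lose and receive $0$---and the only role of $b_i$ is to decide, through the threshold $\theta_i^A$, on which side of this dichotomy the agent falls. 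Well-definedness of $\theta_i^A$ for each fixed $s_i$ is itself a consequence of the monotone property.

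First I would reduce the two-dimensional deviation space to a one-dimensional one by showing that no agent benefits from misreporting the size. If $s_i > e_i$ and the agent is selected, then by exactness the allocation is $Alloc_i(B) = s_i > e_i$; since the single-minded agent cannot deliver more than $e_i$, the true cost is $\infty$ and the utility is $-\infty$, whereas a truthful report yields non-negative utility whenever it wins (individual rationality) and $0$ otherwise, so over-reporting the size is weakly dominated. For under-reporting, $s_i' < e_i$, I would extract from the monotone definition a monotonicity of the critical value in the size coordinate: fixing $b_i$, any declaration that wins with size $s_i'$ also wins under the higher declaration $(e_i, b_i)$, so the winning set of costs is nested and $\theta_i^A(s_i') \le \theta_i^A(e_i)$. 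Because the allocation $Alloc_i(B) = s_i' \le e_i$ is deliverable, the true cost remains $c_i$, so under-reporting the size yields a weakly smaller payment at the same cost and hence no gain. Thus it suffices to analyze cost deviations with $s_i = e_i$ fixed.

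With the size pinned at $s_i = e_i$, write $\theta_i$ for the corresponding critical value and split on the true cost. If $c_i \le \theta_i$, truthful bidding $b_i = c_i$ wins and gives utility $\theta_i - c_i \ge 0$; any report $b_i' \le \theta_i$ still wins with the \emph{same} payment $\theta_i$ and hence the same utility, while any $b_i' > \theta_i$ loses and gives $0 \le \theta_i - c_i$, so no deviation helps. If $c_i > \theta_i$, truthful bidding loses and gives utility $0$; the only way to alter the outcome is to report $b_i' \le \theta_i$, which wins but yields $\theta_i - c_i < 0$, strictly worse. In both cases $U_i((e_i,c_i),B_{-i}) \ge U_i((s_i,b_i),B_{-i})$, which is exactly Definition~\ref{def:truth}.

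I expect the main obstacle to be the size-deviation step rather than the cost-deviation step. The cost argument is the textbook threshold computation once payment-independence from $b_i$ is in hand, but ruling out profitable size misreports requires carefully combining exactness and single-mindedness (to fix the true cost at $c_i$ when $Alloc_i(B)\le e_i$, or force it to $\infty$ otherwise) with the monotonicity of $\theta_i^A$ in the size coordinate, the latter being a consequence of the monotone property that must be teased out of the nested winning sets rather than assumed outright.
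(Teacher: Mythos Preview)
Your argument is correct: payment independence from $b_i$ via Definition~\ref{def:cp}, the exactness/single-mindedness combination that makes size-overreporting yield utility $-\infty$ whenever selected, the monotonicity-in-size of $\theta_i^A$ (nested winning sets) that makes size-underreporting unprofitable, and the final threshold case split on cost all go through. One stylistic point: you appeal to ``individual rationality'' for the nonnegativity of truthful utility, but IR has not been established separately; what you actually need, and what is immediate, is that winning with the truthful bid $c_i$ forces $c_i\le\theta_i^A(e_i)$. The paper, however, does not give a proof at all: its entire argument is a citation of~\cite{briest2011approximation}, observing that Briest, Krysta, and V\"ocking established the result for utilitarian (max-knapsack) problems and already noted its validity for minimization settings such as the reverse single-minded multi-unit auction. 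Your route is therefore genuinely different in that it is self-contained; it makes explicit the roles of exactness in killing size-overreports and of monotonicity in ordering the critical values across sizes, both of which the paper simply outsources to the reference.
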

\begin{proof}
In~\cite{briest2011approximation}, they gave the detailed proof for utilitarian problems, and thus it holds for MAX-knapsack problem. Moreover, in their paper, it was shown that the proof is valid for minimization problems, such as the reverse single-minded multi-unit auction problem, which is equivalent to the minimum knapsack problem. 
\end{proof}

\begin{algorithm}
\caption{Algorithm $P^{\mathcal{A}}(B)$}\label{alg:payment}
\KwIn{The bidding $B$ of all tenants, and the allocation algorithm ${\mathcal{A}}$}


\For{$i \gets 1$ \textbf{to} $n$} {
  
  Let $z_i  = 1$ if the $i$th item is selected in the knapsack by the allocation problem ${\mathcal{A}} (B)$, and $0$ otherwise.

  Let $h = \alpha \gamma s_i$ and $l = b_i$.

  \While {$h - l \ge 1$}{
  $b_i^\prime = (h + l)/2$;

  $z_i = {\mathcal{A}} (B_{-i}, (s_i, b_i^\prime))$;

   \eIf {$z_i == 1$}{ 
   $l = b_i^\prime$
   }  
   { 
   $h = b_i^\prime$
   }    
  }
 $P_i(B) \gets l$.
  }
\KwOut{The payment $P_i(B)$ for each agent $i$.}
\end{algorithm}

\begin{theorem}\label{theo:trfptas}
The mechanism $M_{A_{FPTAS}} = (A_{FPTAS}, p^{A_{FPTAS}})$ is truthful and it is an FPTAS mechanism, i.e. its approximation ratio is $1+\epsilon$ for any given $\epsilon > 0$, and the total running time of the mechanism is polynomial in $1/\epsilon, n, \log c_{max} $.
\end{theorem}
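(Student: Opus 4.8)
The plan is to assemble this theorem from the three ingredients already established in the section: the approximation guarantee of Lemma \ref{lem:fptas}, the monotonicity/bitonicity of Lemma \ref{lem:asb}, and the truthfulness criterion of Theorem \ref{theo:msp}. First I would invoke Lemma \ref{lem:fptas} directly: it states that $A_{FPTAS}$ achieves approximation ratio $1+\epsilon$ (after rescaling the $\epsilon$ by a constant factor of $2$, which is absorbed by relabeling) and runs in time $O(\tfrac{1}{\epsilon} n^3 \log c_{\max})$, which is polynomial in $1/\epsilon$, $n$, and $\log c_{\max}$. So the FPTAS part of the claim transfers immediately, including the running-time bound, once we also account for the overhead of the payment computation.

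Next I would argue truthfulness by checking the hypotheses of Theorem \ref{theo:msp}: the allocation algorithm must be (i) monotone, (ii) exact, and the agents must be single-minded. Monotonicity is exactly Lemma \ref{lem:asb}. Single-mindedness is part of the model (Section \ref{sec:ps}, each tenant is restricted to one bid). For exactness I would point out that whenever a tenant $i$ is selected, the algorithm credits her with her declared reduction $s_i$ (never more), so $\mathit{Alloc}_i(B)\in\{s_i,\emptyset\}$; combined with the single-minded cost convention (cost $\infty$ if allocated more than $s_i$), this rules out profitable over-declaration of size, so $s_i$ cannot be inflated. Then Theorem \ref{theo:msp} yields that $M_{A_{FPTAS}}=(A_{FPTAS},p^{A_{FPTAS}})$ with the critical-value payment of Definition \ref{defi:pay} is truthful; I would note that Algorithm \ref{alg:payment} computes exactly this critical value $\theta_i^{A_{FPTAS}}$ by binary search on $[b_i,\alpha\gamma s_i]$, so the payment rule implemented coincides with the one required by the theorem.

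Finally I would bound the total running time of the mechanism: the allocation is run once, and the payment routine performs, for each of the $n$ agents, a binary search that invokes $A_{FPTAS}$ $O(\log(\alpha\gamma s_{\max}))$ times; multiplying gives $O\big(n \log(\alpha\gamma s_{\max})\cdot \tfrac{1}{\epsilon} n^3 \log c_{\max}\big)$, still polynomial in $1/\epsilon$, $n$, and the logarithms of the numeric parameters, as claimed. The main obstacle I anticipate is the exactness/monotonicity bookkeeping across the composition: $A_{FPTAS}$ is a $\mathrm{MIN}$ over the subroutines $A_r(k,I)$, and each subroutine rounds costs using the threshold $2^k$ and the granularity $a_k$ that depend only on $k$ and $n$ (not on the bids), so a higher declaration by agent $i$ never changes the rounding applied to the other items and can only help $i$ enter the rounded optimal solution; verifying that the bitonic structure of Lemma \ref{lem:akm} survives the $\mathrm{MIN}$ operator (Lemma \ref{lem:minop}) and that exactness is preserved — in particular that $A_{FPTAS}$ never allocates a winner more than her declared $s_i$ — is the delicate step, but it follows from the structure already set up.
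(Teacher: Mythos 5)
Your proposal is correct and follows essentially the same route as the paper: truthfulness via Lemma~\ref{lem:asb} (monotone/bitonic), exactness, single-mindedness and Theorem~\ref{theo:msp} with the critical payment of Algorithm~\ref{alg:payment}, and the approximation ratio and running time via Lemma~\ref{lem:fptas}. Your additional bookkeeping on the payment-phase running time and the $\epsilon$-rescaling only makes explicit what the paper leaves implicit.
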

\begin{proof}
Algorithm $P^{\mathcal{A}}(B)$ (Algorithm~\ref{alg:payment}) is a critical payment, Algorithm $A_{FPTAS}$ is an exact algorithm, and bitonic with respect to the objective function according to Lemma~\ref{lem:asb}. Thus the mechanism $M_{A_{FPTAS}}$ is truthful followed by Theorem~\ref{theo:msp}. The FPTAS is achieved in Lemma~\ref{lem:fptas}.

\end{proof}

\section{Implementation}

In this section, we provide a reverse auction system~(RAS) to solve the MEDR problem and implement all the algorithms for DR on colocation data centers. 

\subsection{Reverse Auction System}

The architecture of the RAS is shown in Fig.~\ref{fig:ras}. 
The whole system framework consists of two parts, namely, client and server. The client refers to users including the bidding tenants and the colocation data center operator. 
The server is a MEDR auction system, which is implemented and run as a service. This service program involves three primary function modules, that is, {\em auction controller~(AC)}, {\em bidding decision~(BD)} and {\em power controller (PC)}, respectively. 

The {\em AC} module is responsible for interaction with client users, including receiving the client tenants' bidding application, invoking bidding decision (BD) to make a bidding decision, and returning back the bidding decision results to client users.
The {\em BD} module implements all the MEDR algorithms presented in Section~\ref{sec:sm}, including the dynamic programming algorithm $A_{DOPT}$, the monotone FPTAS algorithm $A_{FPTAS}$ and the payment algorithm $A_{Payment}$. All the algorithms are developed by C/C++ language, and are integrated together as an independent service program. 
The {\em PC} module takes charge of executing power supply policies for balancing power using according to the bidding results. In particular, the RAS runs in a colocation data center for all tenants and the colocation operator. 

\begin{figure}[ht]
\centering
\includegraphics[width=9.0cm,height=8.0cm]{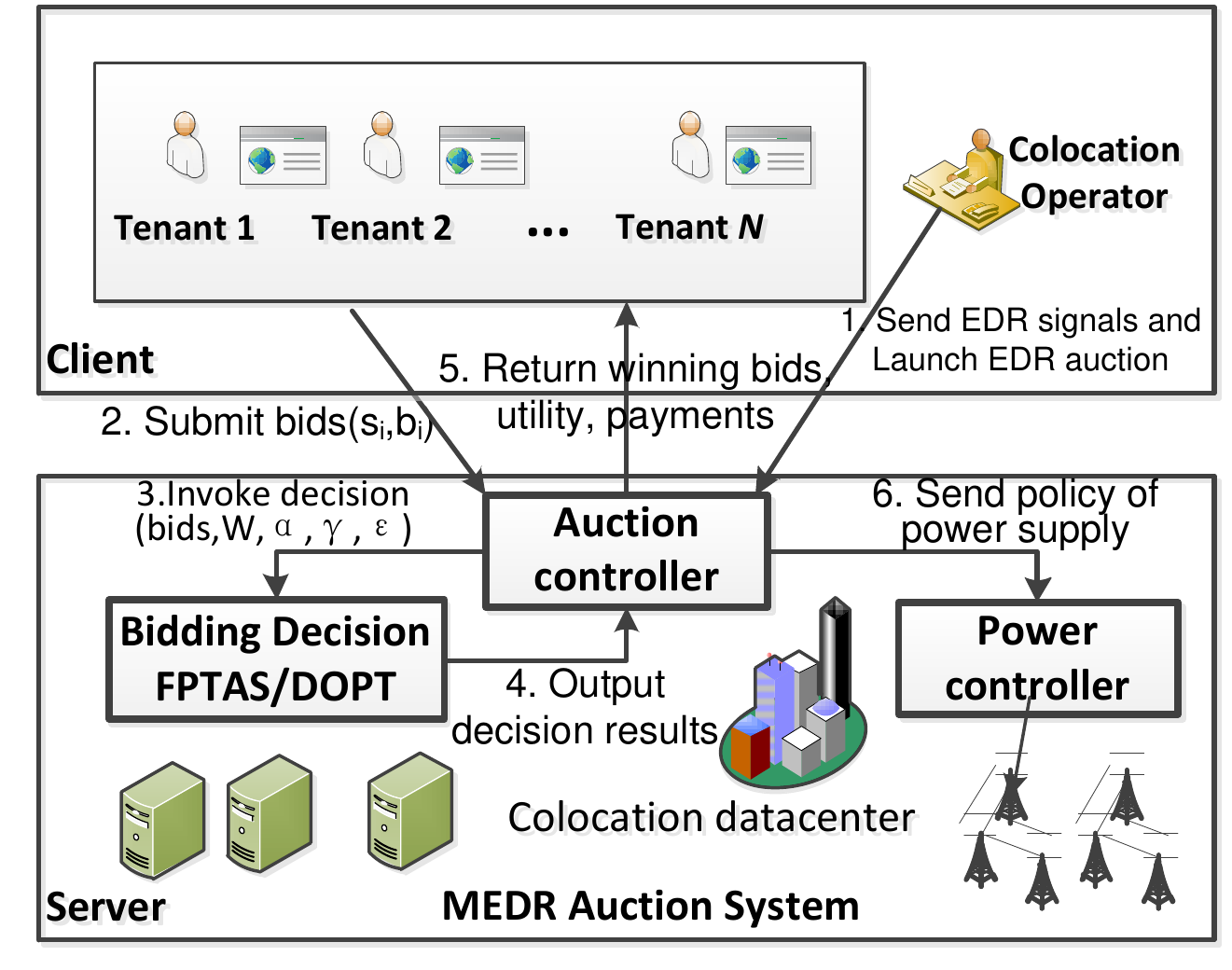}
\caption{Architecture of MEDR auction system }
\label{fig:ras}
\end{figure}

\subsection{Reverse Auction Process}

As shown in Fig.~\ref{fig:ras}, the basis process of reverse auction is as follows. At first, upon arrival of DR or EDR signals in client, the colocation operator (Auctioneer) will firstly launch an auction in RAS and send the EDR signal to tenants. The tenants who are willing to participate then submit their own bids with the size of power and costs to the RAS server. 

Then, the \textbf{AC} receives the bids from client tenant users, and then ask for BD to process the bidding decision when the system needs to make decision of bidding. In bidding decision, BD invokes FPTAS to make a decision. Before invoking, BD needs to input the bidding data and the value of decision algorithm required parameters, such as EDR $W$, $\alpha$, $\gamma$, and $\epsilon$, etc. 
BD will input all data from the AC and calls a decision algorithm to get bidding decision results. After running the decision algorithms, it outputs the bidding decision results to the AC.

Further, the AC returns back the bidding decision results to the client. The results include the winning bids, and payments. 
In the end, not only the winning tenants will achieve compensation rewards from the colocation data center operator, but also the new policy of power supply will be submitted to the PC and be executed for the DR.

\section{Performance evaluation}\label{sec:exp}

In this section, we firstly generate simulation datasets. We then do holistic simulation experiments to evaluate the efficiency of FPTAS mechanism which further validates the effectiveness of our RAS system performance.

\subsection{Simulation Datasets}

We consider a colocation data center, with $N(N=9)$ participating tenants (denoted as Tenant \#1, Tenant \#2, ...,Tenant \#9). 
Each tenant $i$ has $m_{i}(=100,000)$ homogeneous servers. The power of each server includes $150W$ computing power and $d_0=100W$ idle power. That is to say, if a server is turned off, then we can obtain 100W power. If a server is running a workload, it needs $150+100=250W$ power. 

For simplicity, we produce simulation datasets according to the datasets used in~\cite{zhang2015infocom},
including total EDR energy reduction by PJM on January 7, 2014 and normalized workload traces which are collected from~\cite{Lin2013Dynamic}~(``Hotmail'' and ``MSR'') and~\cite{Vasudevan2009Safe}~(``Wikipedia''), as shown in Fig.~\ref{fig:edra} and Fig.~\ref{fig:wload}.
The EDR energy reduction data consists of eleven event numbers. Each number represents an energy reduction of demand response arise in one hour period of time. The eleven time periods are made up of a range in hour from 5 to 11 and from 16 to 19. 
Moreover, we get the specific data from the two figures and list in Table~\ref{tab:wl} in detail. Like Zhang~\cite{zhang2015infocom}, we further generate EDR energy reduction targets as $15\%$  of the total EDR, which lists in the $W$ column of Table~\ref{tab:wl}.

The tenants' bids data are generated randomly with consideration of the workload traces and some data center factors, such as the general cost of a server or whole data center and the energy price in market. We duplicate each workload for all tenants with randomness up to $20\%$. For the total $9$ tenants, we firstly generate $9$ random numbers $r_{i}$, where $i=1,2,\ldots,9$, between $0.01$ and $0.20$. In our simulation, the random numbers are $0.11, 0.06, 0.02, 0.13, 0.05, 0.01, 0.15, 0.16, 0.02$. 
Then we duplicate the workload for three tenants by Hotmail, three by MSR and three by Wikipedia, respectively. We assume that each size of the workload denotes a ratio of the total number of servers which is in running status. If we turn off some servers then we can slash energy consumption. The total energy reduction by tenant $i$ is $s_{i}=n_{i}.d_{0,i}.T$, where $T=1$ hour in simulation. 

We conclude the size of tenants' bid by a formula: 
$s_{i}=workloads * M * r_{i} * d_{0}/1,000,000 (MW)$, where $M=100,000$ and $d_{0}=100$. The final range of size $s$ is between $10\sim 80$. The workloads use data shown in Table~\ref{tab:wl}. Besides, according to Zhang~\cite{zhang2015infocom}, tenants can reasonably save $6.7\sim 13.3$ cents/kWh (depending on electricity price) power when they house servers in their data centers. Equivalently, tenants can save $67\sim 133$ \$/MWh. Hence we produce several random numbers $rb$ between 67\$ and 133\$ for each tenant as a bid power price which tenants are willing to take part in the auction activity. The $i$th tenant bid is concluded by formula $b_{i}=s_{i} * rb_{i}$. All produced simulation data are in Table~\ref{tab:sm}. Each tenant bid $B{{i}}$ has two numbers $s_{i},b_{i}$, which denote a size of energy reduction and cost for supplying its responding size of power, respectively. 

\begin{figure}[t]
  \centering
  \mbox{
      \subfigure[Total EDR energy reduction by PJMon January 7, 2014.]{
        \includegraphics[width=0.24\textwidth,height=0.20\textwidth]{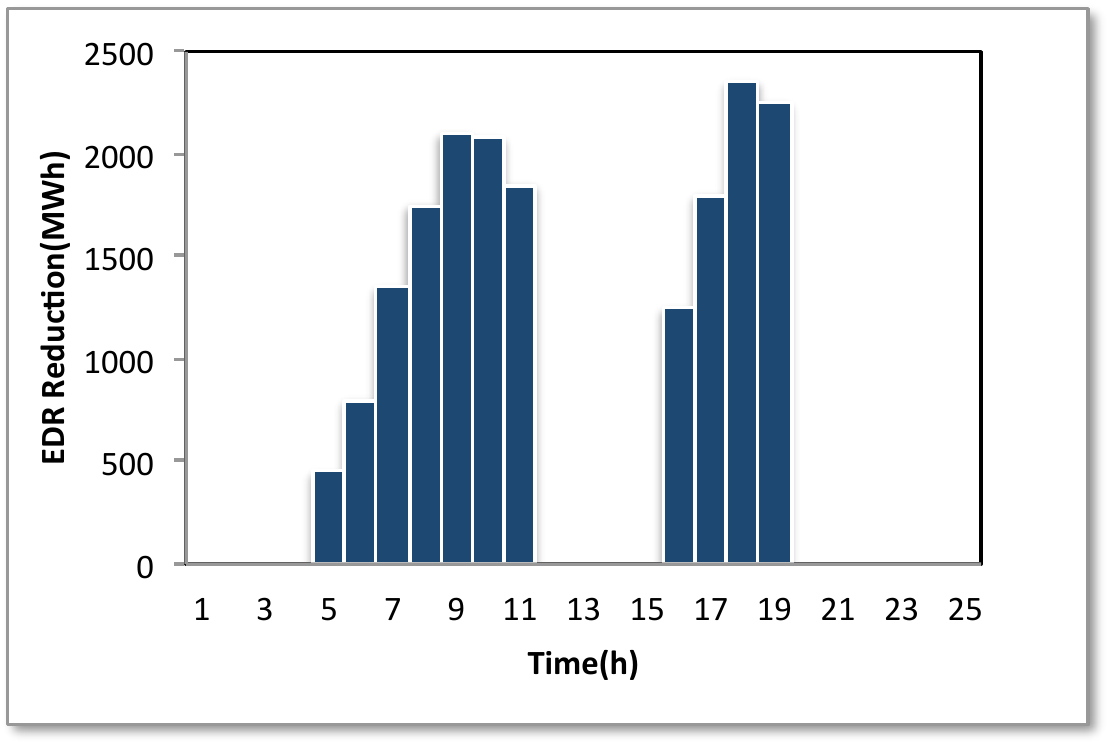}\label{fig:edra}}\hspace{.02in}
      \subfigure[Normalized workloads in distinct time.] {
        \includegraphics[width=0.23\textwidth,height=0.20\textwidth]{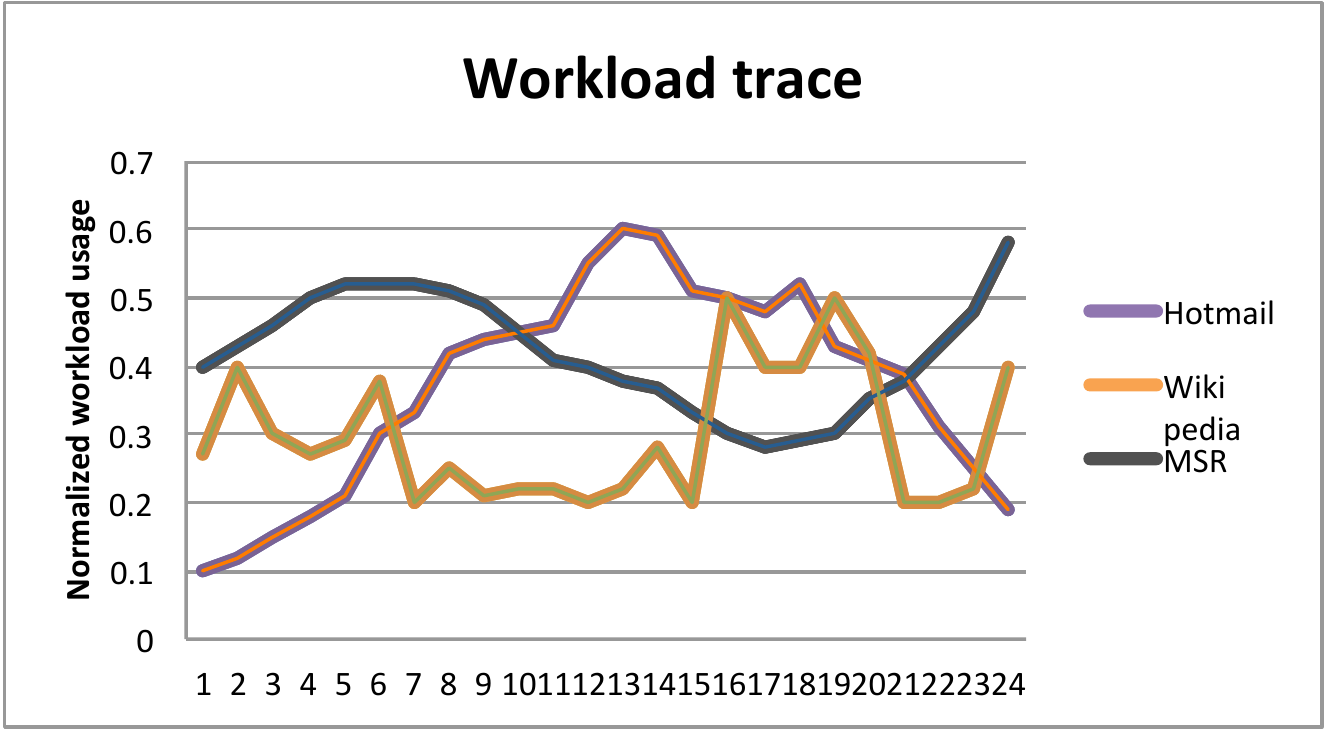}\label{fig:wload}}
       
  }
 
  \caption{The source dataset for bidding decision simulation.}
\end{figure}

\begin{table}[ht]
\caption{EDR target \& Workload}\label{tab:wl}
\centering
\begin{tabular}{|c|c|c|c|c|c|}
\hline
\multirow{1}{*}{Time(h)}&\multirow{1}{*}{EDR}&\multirow{1}{*}{W(15\%EDR)}
&\multirow{1}{*}{Hotmail}& \multirow{1}{*}{MSR}&\multirow{1}{*}{Wikipedia}
\\
\hline
\multirow{1}{*}{5}&\multirow{1}{*}{450}&\multirow{1}{*}{68}
&\multirow{1}{*}{0.21}&\multirow{1}{*}{0.52}&\multirow{1}{*}{0.29}
\\ 
\hline
\multirow{1}{*}{6}&\multirow{1}{*}{800}&\multirow{1}{*}{120}
&\multirow{1}{*}{0.30}&\multirow{1}{*}{0.52}&\multirow{1}{*}{0.20}
\\ 
\hline
\multirow{1}{*}{7}&\multirow{1}{*}{1,350}&\multirow{1}{*}{203}
&\multirow{1}{*}{0.33}&\multirow{1}{*}{0.52}&\multirow{1}{*}{0.20}
\\
\hline
\multirow{1}{*}{8}&\multirow{1}{*}{1,750}&\multirow{1}{*}{263}
&\multirow{1}{*}{0.42}&\multirow{1}{*}{0.51}&\multirow{1}{*}{0.25}
\\
\hline
\multirow{1}{*}{9}&\multirow{1}{*}{2,100}&\multirow{1}{*}{315}
&\multirow{1}{*}{0.44}&\multirow{1}{*}{0.49}&\multirow{1}{*}{0.21}
\\
\hline
\multirow{1}{*}{10}&\multirow{1}{*}{2,080}&\multirow{1}{*}{312}
&\multirow{1}{*}{0.45}&\multirow{1}{*}{0.45}&\multirow{1}{*}{0.22}
\\
\hline
\multirow{1}{*}{11}&\multirow{1}{*}{1,850}&\multirow{1}{*}{278}
&\multirow{1}{*}{0.46}&\multirow{1}{*}{0.41}&\multirow{1}{*}{0.22}
\\
\hline
\multirow{1}{*}{16}&\multirow{1}{*}{1,250}&\multirow{1}{*}{188}
&\multirow{1}{*}{0.50}&\multirow{1}{*}{0.30}&\multirow{1}{*}{0.50}
\\
\hline
\multirow{1}{*}{17}&\multirow{1}{*}{1,800}&\multirow{1}{*}{270}
&\multirow{1}{*}{0.48}&\multirow{1}{*}{0.28}&\multirow{1}{*}{0.40}
\\
\hline
\multirow{1}{*}{18}&\multirow{1}{*}{2,350}&\multirow{1}{*}{353}
&\multirow{1}{*}{0.52}&\multirow{1}{*}{0.29}&\multirow{1}{*}{0.40}
\\
\hline
\multirow{1}{*}{19}&\multirow{1}{*}{2,250}&\multirow{1}{*}{338}&\multirow{1}{*}{0.43}
&\multirow{1}{*}{0.30}&\multirow{1}{*}{0.50}
\\
\hline
\end{tabular}
\end{table}

The PUE of colocation $\gamma$ is set to 1.6 (typical for colocation), and the default cost for using BES, $\alpha$ is considered 150\$/MWh which we will vary later depending on the BES energy source. 


\begin{table*}[!tp]
\setlength{\abovecaptionskip}{5pt}%
\setlength{\belowcaptionskip}{5pt}%
\caption{The tenant bid data}\label{tab:sm}
\centering
\begin{tabular}{|c|l|l|l|l|l|l|l|l|l|l|l|l|l|l|l|l|l|l|}
\hline
\multirow{2}{*}{\textbf{Hour}}
&\multicolumn{2}{c|}{Tenant\#1}
&\multicolumn{2}{c|}{Tenant\#2}
&\multicolumn{2}{c|}{Tenant\#3}
&\multicolumn{2}{c|}{Tenant\#4}
&\multicolumn{2}{c|}{Tenant\#5}
&\multicolumn{2}{c|}{Tenant\#6}
&\multicolumn{2}{c|}{Tenant\#7}
&\multicolumn{2}{c|}{Tenant\#8}
&\multicolumn{2}{c|}{Tenant\#9}
\\\cline{2-19}
\multirow{1}{*}{}&\multirow{1}{*}{$s_{1}$}&\multirow{1}{*}{$b_{1}$}&\multirow{1}{*}{$s_{2}$}&\multirow{1}{*}{$b_{2}$}&\multirow{1}{*}{$s_{3}$}&\multirow{1}{*}{$b_{3}$}&\multirow{1}{*}{$s_{4}$}&\multirow{1}{*}{$b_{1}$}&\multirow{1}{*}{$s_{5}$}&\multirow{1}{*}{$b_{5}$}&\multirow{1}{*}{$s_{6}$}&\multirow{1}{*}{$b_{6}$}&\multirow{1}{*}{$s_{7}$}&\multirow{1}{*}{$b_{7}$}&\multirow{1}{*}{$s_{8}$}&\multirow{1}{*}{$b_{8}$}&\multirow{1}{*}{$s_{9}$}&\multirow{1}{*}{$b_{9}$}\\\hline
\multirow{1}{*}{5}&\multirow{1}{*}{23}&\multirow{1}{*}{2,737}&\multirow{1}{*}{12}&\multirow{1}{*}{1,284}&\multirow{1}{*}{4}&\multirow{1}{*}{352}&\multirow{1}{*}{67}&\multirow{1}{*}{4,623}&\multirow{1}{*}{26}&\multirow{1}{*}{2,704}&\multirow{1}{*}{5}&\multirow{1}{*}{555}&\multirow{1}{*}{43}&\multirow{1}{*}{3,569}&\multirow{1}{*}{46}&\multirow{1}{*}{5,888}&\multirow{1}{*}{5}&\multirow{1}{*}{570}\\\hline
\multirow{1}{*}{6}&\multirow{1}{*}{33}&\multirow{1}{*}{3,531}&\multirow{1}{*}{18}&\multirow{1}{*}{1,782}&\multirow{1}{*}{6}&\multirow{1}{*}{648}&\multirow{1}{*}{67}&\multirow{1}{*}{8,710}&\multirow{1}{*}{26}&\multirow{1}{*}{2,392}&\multirow{1}{*}{5}&\multirow{1}{*}{470}&\multirow{1}{*}{57}&\multirow{1}{*}{4,617}&\multirow{1}{*}{60}&\multirow{1}{*}{5,100}&\multirow{1}{*}{7}&\multirow{1}{*}{588}\\\hline
\multirow{1}{*}{7}&\multirow{1}{*}{36}&\multirow{1}{*}{3,960}&\multirow{1}{*}{19}&\multirow{1}{*}{1,653}&\multirow{1}{*}{6}&\multirow{1}{*}{462}&\multirow{1}{*}{67}&\multirow{1}{*}{6,700}&\multirow{1}{*}{26}&\multirow{1}{*}{1,976}&\multirow{1}{*}{5}&\multirow{1}{*}{605}&\multirow{1}{*}{30}&\multirow{1}{*}{3,960}&\multirow{1}{*}{32}&\multirow{1}{*}{3,136}&\multirow{1}{*}{4}&\multirow{1}{*}{364}\\\hline
\multirow{1}{*}{8}&\multirow{1}{*}{46}&\multirow{1}{*}{5,336}&\multirow{1}{*}{25}&\multirow{1}{*}{1,950}&\multirow{1}{*}{8}&\multirow{1}{*}{784}&\multirow{1}{*}{66}&\multirow{1}{*}{6,864}&\multirow{1}{*}{25}&\multirow{1}{*}{2,350}&\multirow{1}{*}{5}&\multirow{1}{*}{585}&\multirow{1}{*}{37}&\multirow{1}{*}{4,440}&\multirow{1}{*}{40}&\multirow{1}{*}{4,240}&\multirow{1}{*}{5}&\multirow{1}{*}{565}\\\hline
\multirow{1}{*}{9}&\multirow{1}{*}{48}&\multirow{1}{*}{3,600}&\multirow{1}{*}{26}&\multirow{1}{*}{2,262}&\multirow{1}{*}{8}&\multirow{1}{*}{960}&\multirow{1}{*}{63}&\multirow{1}{*}{8,064}&\multirow{1}{*}{24}&\multirow{1}{*}{1,656}&\multirow{1}{*}{4}&\multirow{1}{*}{476}&\multirow{1}{*}{31}&\multirow{1}{*}{3,162}&\multirow{1}{*}{33}&\multirow{1}{*}{3,564}&\multirow{1}{*}{4}&\multirow{1}{*}{516}\\\hline
\multirow{1}{*}{10}&\multirow{1}{*}{49}&\multirow{1}{*}{4,018}&\multirow{1}{*}{27}&\multirow{1}{*}{3,159}&\multirow{1}{*}{9}&\multirow{1}{*}{792}&\multirow{1}{*}{58}&\multirow{1}{*}{5,510}&\multirow{1}{*}{22}&\multirow{1}{*}{2,112}&\multirow{1}{*}{4}&\multirow{1}{*}{332}&\multirow{1}{*}{33}&\multirow{1}{*}{3,267}&\multirow{1}{*}{35}&\multirow{1}{*}{2,590}&\multirow{1}{*}{4}&\multirow{1}{*}{456}\\\hline
\multirow{1}{*}{11}&\multirow{1}{*}{50}&\multirow{1}{*}{6,000}&\multirow{1}{*}{27}&\multirow{1}{*}{3,078}&\multirow{1}{*}{9}&\multirow{1}{*}{693}&\multirow{1}{*}{53}&\multirow{1}{*}{5,353}&\multirow{1}{*}{20}&\multirow{1}{*}{2,440}&\multirow{1}{*}{4}&\multirow{1}{*}{340}&\multirow{1}{*}{33}&\multirow{1}{*}{3,102}&\multirow{1}{*}{35}&\multirow{1}{*}{3,115}&\multirow{1}{*}{4}&\multirow{1}{*}{384}\\\hline
\multirow{1}{*}{16}&\multirow{1}{*}{55}&\multirow{1}{*}{3,960}&\multirow{1}{*}{30}&\multirow{1}{*}{2,160}&\multirow{1}{*}{10}&\multirow{1}{*}{1,150}&\multirow{1}{*}{39}&\multirow{1}{*}{3,159}&\multirow{1}{*}{15}&\multirow{1}{*}{1,995}&\multirow{1}{*}{3}&\multirow{1}{*}{399}&\multirow{1}{*}{75}&\multirow{1}{*}{7,950}&\multirow{1}{*}{80}&\multirow{1}{*}{8,320}&\multirow{1}{*}{10}&\multirow{1}{*}{1,290}\\\hline
\multirow{1}{*}{17}&\multirow{1}{*}{52}&\multirow{1}{*}{4,420}&\multirow{1}{*}{28}&\multirow{1}{*}{2,016}&\multirow{1}{*}{9}&\multirow{1}{*}{1,170}&\multirow{1}{*}{36}&\multirow{1}{*}{4,068}&\multirow{1}{*}{14}&\multirow{1}{*}{1,484}&\multirow{1}{*}{2}&\multirow{1}{*}{178}&\multirow{1}{*}{60}&\multirow{1}{*}{7,620}&\multirow{1}{*}{64}&\multirow{1}{*}{4,352}&\multirow{1}{*}{8}&\multirow{1}{*}{952}\\\hline
\multirow{1}{*}{18}&\multirow{1}{*}{57}&\multirow{1}{*}{7,581}&\multirow{1}{*}{31}&\multirow{1}{*}{3,100}&\multirow{1}{*}{10}&\multirow{1}{*}{1,030}&\multirow{1}{*}{37}&\multirow{1}{*}{4,144}&\multirow{1}{*}{14}&\multirow{1}{*}{1,358}&\multirow{1}{*}{2}&\multirow{1}{*}{212}&\multirow{1}{*}{60}&\multirow{1}{*}{6,540}&\multirow{1}{*}{64}&\multirow{1}{*}{7,360}&\multirow{1}{*}{8}&\multirow{1}{*}{712}\\\hline
\multirow{1}{*}{19}&\multirow{1}{*}{47}&\multirow{1}{*}{4,136}&\multirow{1}{*}{25}&\multirow{1}{*}{3,275}&\multirow{1}{*}{8}&\multirow{1}{*}{744}&\multirow{1}{*}{39}&\multirow{1}{*}{5,031}&\multirow{1}{*}{15}&\multirow{1}{*}{1,380}&\multirow{1}{*}{3}&\multirow{1}{*}{270}&\multirow{1}{*}{75}&\multirow{1}{*}{7,725}&\multirow{1}{*}{80}&\multirow{1}{*}{5,760}&\multirow{1}{*}{10}&\multirow{1}{*}{930}\\\hline
\end{tabular}
\end{table*}

\subsection{Experiment and Result Analysis}

To evaluate the performance of the MEDR method, we consider the approximation ratio of the FPTAS mechanism, tenants' utility and social cost reductions.
For simplification, we prepare all the simulation data and related parameter data of algorithms together into a file which is used as input for the bidding decision program. During the test running process, our bidding decision program will read the data in a batch, process all the algorithms execution, and then output all results to a file for analysis.   

\subsubsection{Approximation Ratios}
Our FPTAS mechanism can achieve $1+\epsilon$ approximation ratio. Though FPTAS ensures the running time is polynomial in  $n$, and  $1/\epsilon$. 
To evaluate the efficiency of the FPTAS mechanism in practice, we use approximation ratio metric which is concluded by the ratio between the FPTAS-approximation algorithm and the DOPT optimal, i.e. $AR=y(FPTAS)\//y(DOPT)$. 
In the simulation, we vary the parameters $\alpha,\gamma,$ and $\epsilon$ to run the FPTAS and DOPT algorithm to obtain the results respectively. 
In each test, we fix two parameters and change one parameter regularly. 
Obviously we have three cases of ratio testing, which are named as $\alpha$-ratio test, $\gamma$-ratio test and $\epsilon$-ratio test, respectively. In $\alpha$-ratio test, we set $\gamma$ to $1.6$ and set $\epsilon$ to $0.5$ and vary the parameter $\alpha$ with a range from 140 to 320 with an increasing step 20. In $\gamma$-ratio test, $\epsilon$ is set to $0.5$, $\alpha$ is set to 180\$, and $\gamma$ is changed from 1.1 to 2 with an increasing step $0.1$. In $\epsilon$-ratio test, $\alpha$ is set to $180$\$, $\gamma$ is set to $1.6$, and $\gamma$ is changed from 1.1 to 2 with an increasing step $0.1$. 

We run the tests for all the $11$ EDR reduction instances. The results are shown in Fig.~\ref{fig:ratio}. We observe that all ratios are between $1$ and $1+\epsilon$. Moreover, the ratios are closer to the line with ratio $1$. It means that the FPTAS solution is very close to the optimal solution. 
Most of the line with ratios of varying $\alpha$ seems to be more flat which means the $\alpha$ parameter has little impact on the ratios. There are several broken lines in Fig.~\ref{fig:gama} and Fig.~\ref{fig:eps}. It demonstrates that the ratio exists a special sensitivity to parameters $\gamma$ and $\epsilon$. 
The largest impact factor is the parameter $\epsilon$. With the increasing of $\epsilon$, the ratio increases as well as in theory.

\begin{figure*}[t]
  \centering
  \mbox{
     \subfigure[$\alpha$-ratio test: the ratio between FPTAS and DOPT for eleven EDR instances on varying $\alpha$, where $\gamma=1.6,\epsilon=0.5$]{
        \includegraphics[width=0.33\textwidth,height=0.25\textwidth]{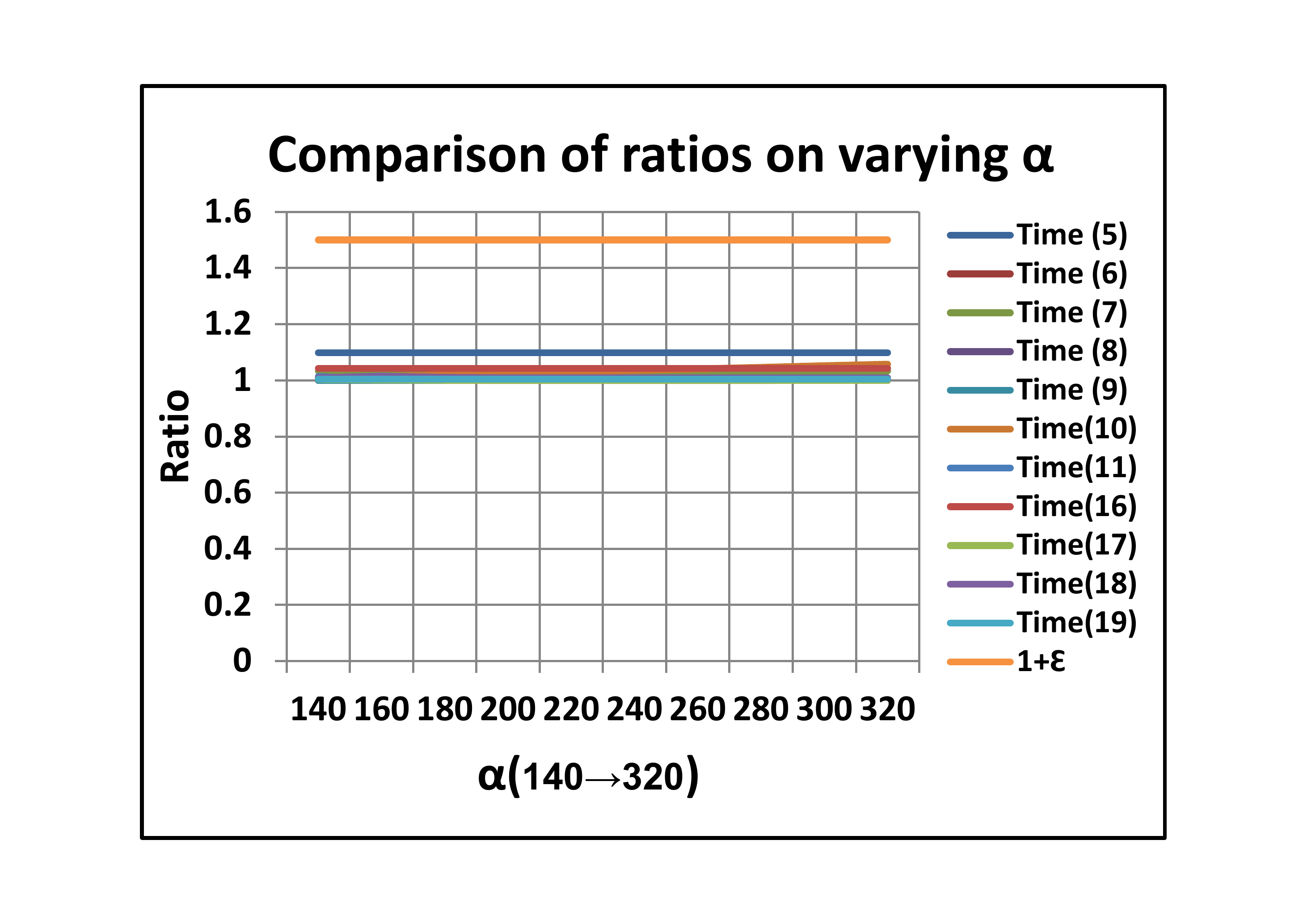}\label{fig:alpha}}\hspace{.02in}

      \subfigure[$\gamma$-ratio test: the ratio between FPTAS and DOPT for eleven EDR instances on varying $\gamma$, where $\alpha=180\$,\epsilon=0.5$]{
    \includegraphics[width=0.33\textwidth,height=0.25\textwidth]{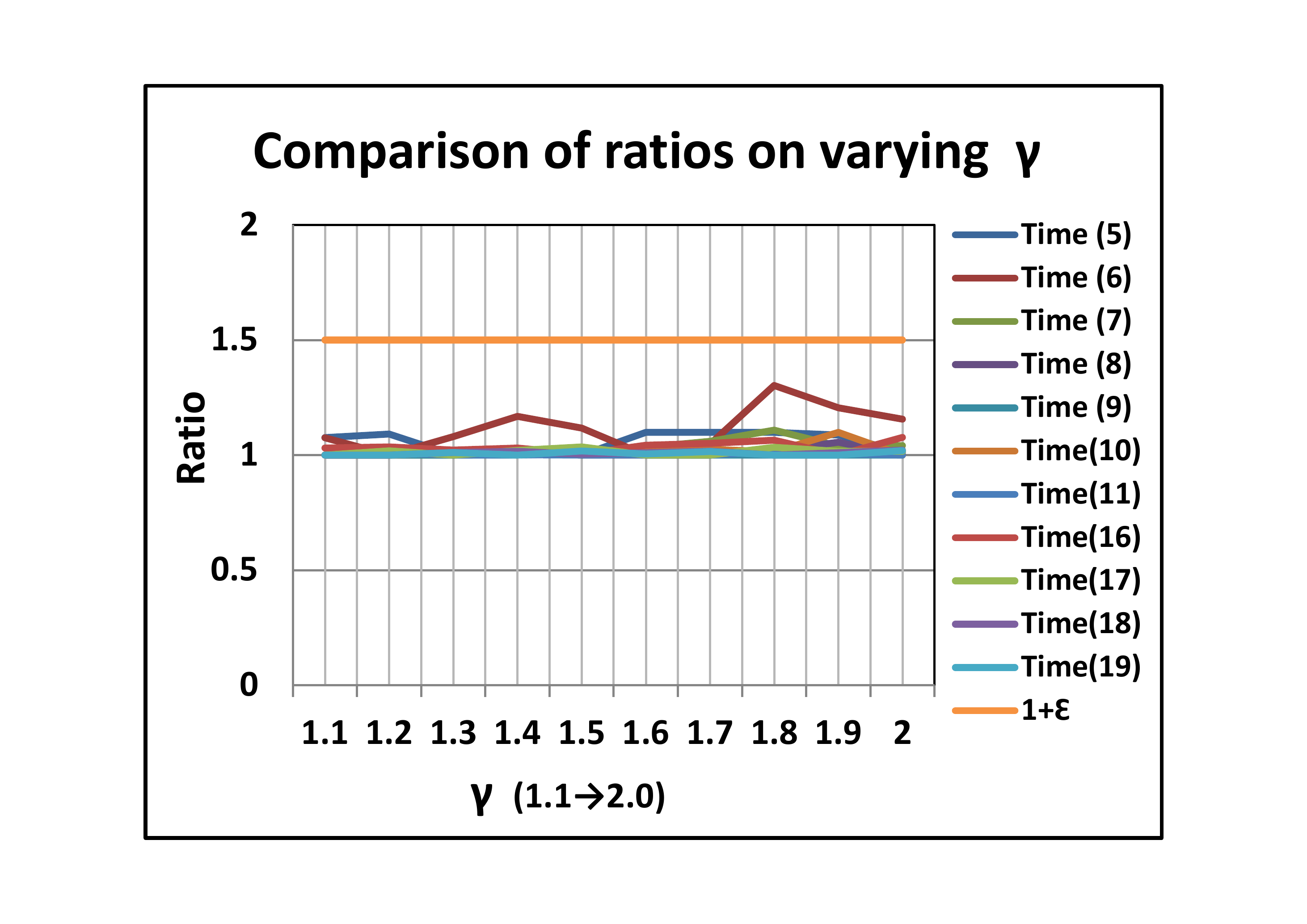}\label{fig:gama}}\hspace{.02in}
       
       \subfigure[$\epsilon$-ratio test: the ratio between FPTAS and DOPT for eleven EDR instances on varying $\epsilon$, where $\alpha=180\$,\gamma=1.6$]{\includegraphics[width=0.33\textwidth,height=0.25\textwidth]{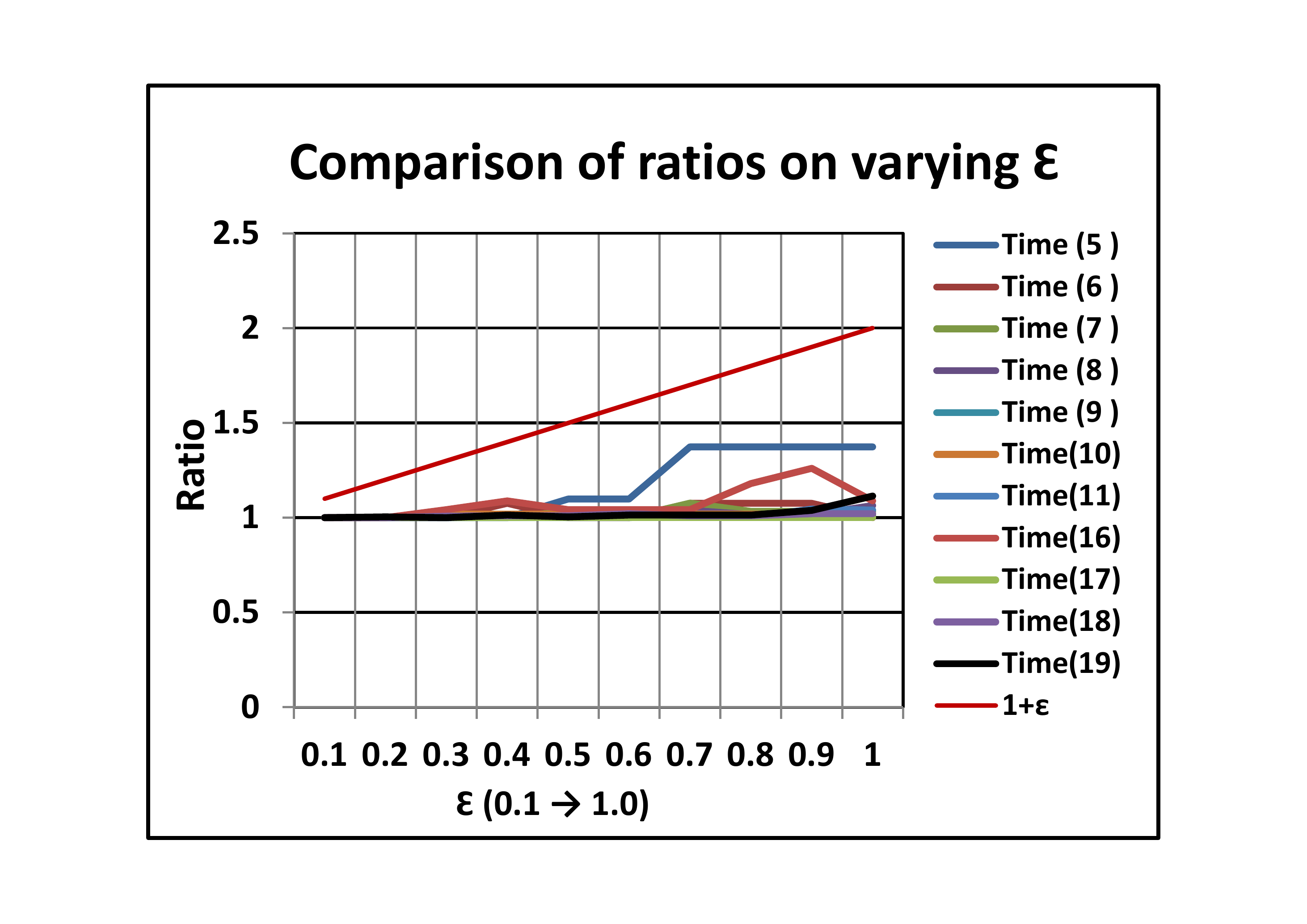}\label{fig:eps}}
    }
\caption{The performance evaluation on approximation ratios in distinct time of EDR instances.}
  \label{fig:ratio}
\end{figure*}

\begin{figure*}[t]
  \centering

\mbox{

      \subfigure[Comparison of tenants' non-negative utilities when $\alpha=180,\gamma=1.6,\epsilon=0.5$.]
      {\includegraphics[width=0.33\textwidth,height=0.25\textwidth]{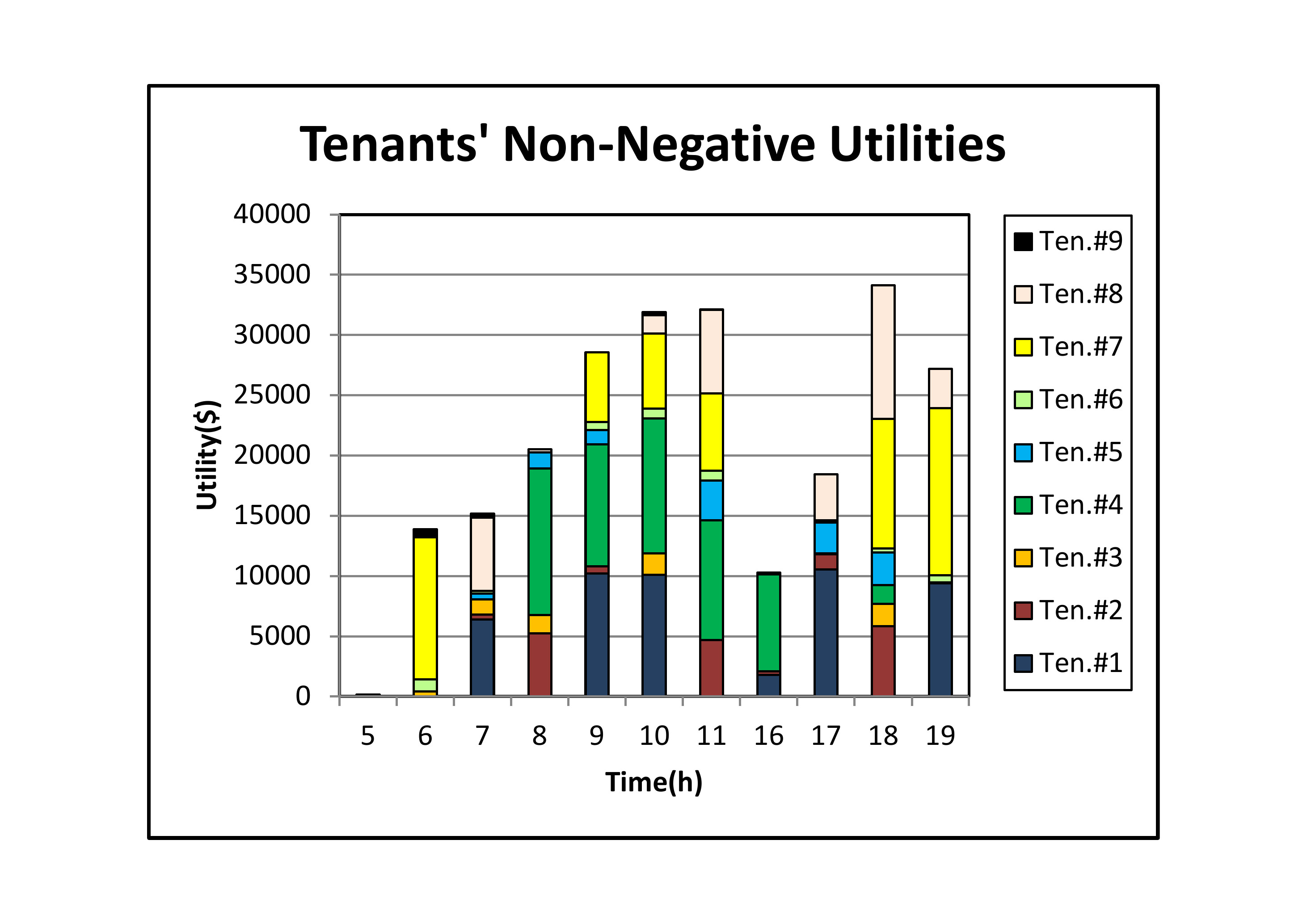}\label{fig:util}}\hspace{.02in}
       \subfigure[Comparison of social cost ratio between FPTAS and BES, where $\gamma=1.6$, $\epsilon=0.5$ and $\alpha$ varying from 140 to 320 with a growing step 20.]{\includegraphics[width=0.33\textwidth,height=0.25\textwidth]{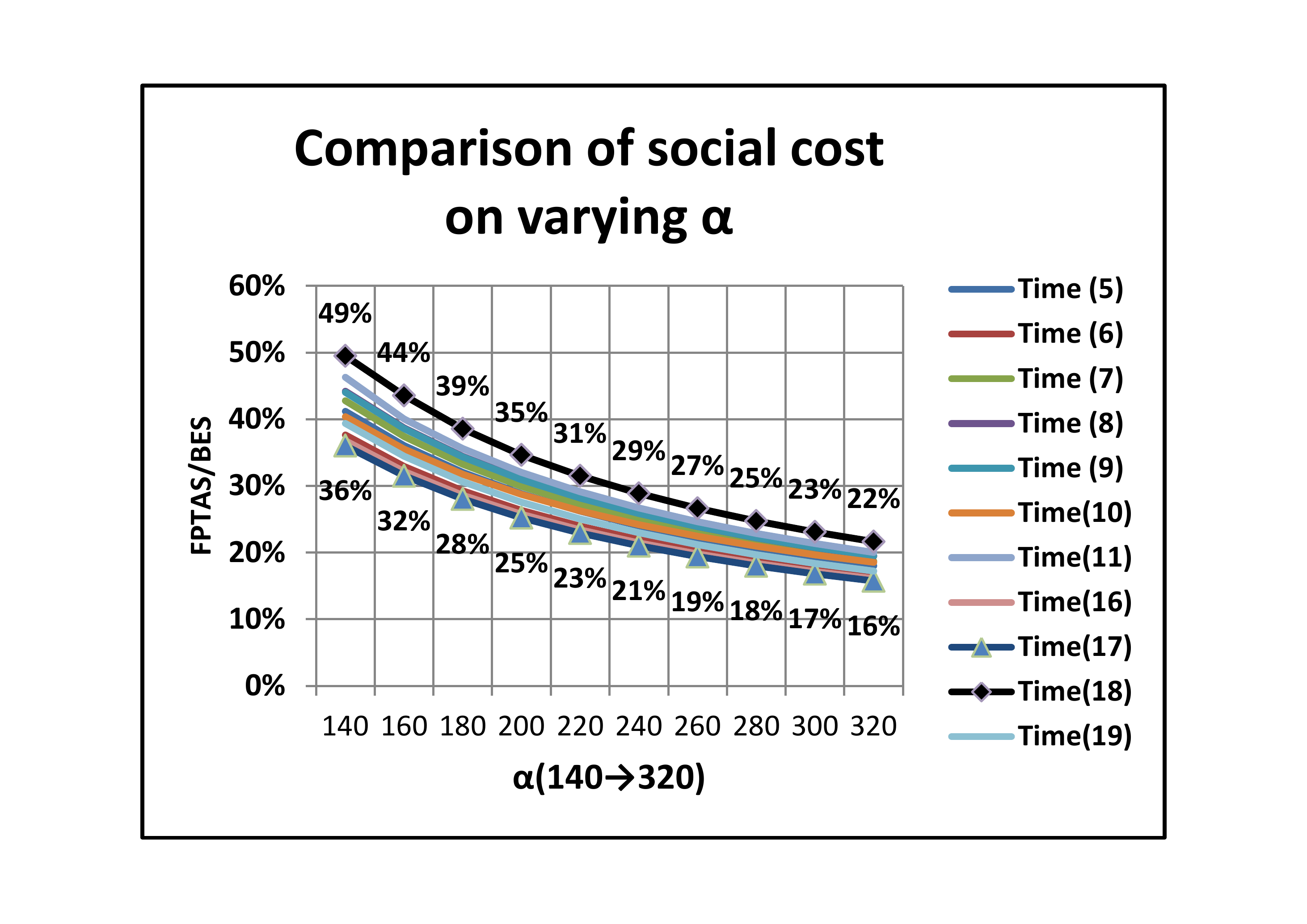}\label{fig:scralpha}}\hspace{.02in}
       \subfigure[Comparison of social cost ratio between FPTAS and BES, where $\alpha=180$, $\epsilon=0.5$ and $\gamma$ varying from 1.1 to 2.0 with a growing step 0.1.]{\includegraphics[width=0.33\textwidth,height=0.25\textwidth]{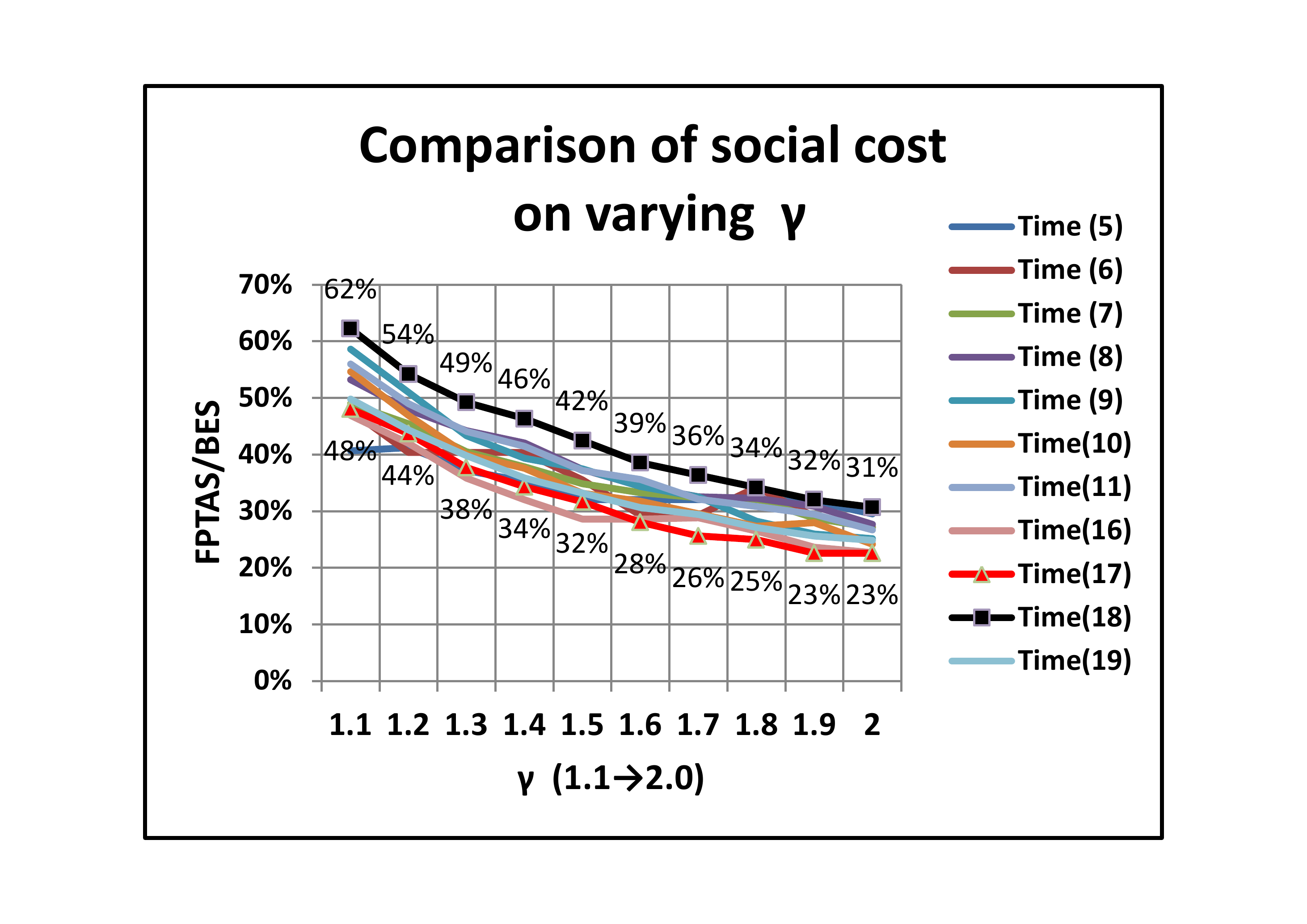}\label{fig:scrgama}}
    }
  
  \caption{The performance evaluation on tenants' utility and social costs.}
  \label{fig:pefeval}
\end{figure*}


\subsubsection{Agents' Utilities}
We study each agent's utility in all the experiments. 
Note that agents' utilities are concluded from the payment for each tenant subtracts his/her actual cost. 
Actually, we got that each tenant obtains a non-negative utility in our experiment. 
We only show the result in Fig.~\ref{fig:util} by letting $\alpha=180\$,\gamma=1.6, \epsilon=0.5$. 
We illustrate each winner tenant's utility in each hour.  
Fig.~\ref{fig:util} shows that the case of the 18th hour time has the largest utility and the 5th hour time has the minimum utility.

\subsubsection{Social Cost Reduction Compared to BES only}
Each winner tenant has obtained a non-negative utility, which implies that the colocation operator will pay a lot of money to these tenants. To study whether this payment is too much, we investigate the social costs when we only use BES. The results are illustrated in Fig.~\ref{fig:scralpha} and Fig.~\ref{fig:scrgama} when the parameters $\alpha$ and $\gamma$ vary. The results reveal that the social costs given by the FPTAS mechanism is much smaller than that one of BES only. The percentages are declined when $\alpha$ or $\gamma$ increases.
So it's very important for the colocation data center to enable more tenants to attend the energy demand response activities.
 
\section{Related works}~\label{sec:rw}
Besides the work in~\cite{zhang2015infocom}, there are several existing research on mechanism design on DR. 
Ren and Islam~\cite{ren14colocation} studied the mechanism design for colocation demand response, but their mechanism is not truthful and may not meet the target of EDR. 
Chen et al.~\cite{chen2015greening} studied the green colocation data centers by designing a pricing mechanism to fulfill energy reduction requirement for EDR. 
The energy reduction from tenants is calculated by the price-taking and price-anticipating equilibrium. 
Zhou et al.~\cite{zhou2015smart} studied demand resource on geo-distributed cloud through VCG-based mechanism, in which the utility of each agent depends highly on its interactive workload. Sun et al.~\cite{sunfair2015} considered fairness among the mechanism design and provided online mechanism with competitive ratio of 3.2 in expectation. 
Ahmed et al.~\cite{ahmed2015contract} proposed a contract-based mechanism, in which the colocation operator offers a set of contracts (i.e., a pair of energy reduction and rewards) to tenants and tenants can voluntarily select none or one of the contracts to accept, while the objective is to minimize the operator's cost, the sum of rewards plus the cost of BES. Islam et al.~\cite{islam2015paying} reduced the operator's cost by learning tenants' response to reward.
Tran et al.~\cite{tran2015incentive} used  a two-stage Stackelberg game to  model the economic demand response where the operator can adjust an elastic energy reduction target.

A closely related work is DR in smart grids. Zhou et al.~\cite{zhou2015demand} studied the  mechanism design on DR in smart grids. Let $\alpha = 1$, and there is an upper bound on the BES, i.e. $ y \le z_{\max}$. 
A randomized FPTAS mechanism was given in~\cite{zhou2015demand}. Their idea is to combine with smooth analysis and randomize auction. Actually, Dough and Roughgarden~\cite{dughmi2014black} showed that if there exists an FPTAS approximation, then this algorithm can be transformed into a truthful in expectation mechanism that retains the FPTAS property. The work in~\cite{dughmi2014black} does not require the existence of FPTAS. However, it still remains open whether there exists a deterministic FPTAS. 


Briest et al.~\cite{briest2011approximation} presented a truthful FPTAS for the max-knapsack problem. 
Our problem differs from the min-knapsack in which we have BES such that the capacity of the knapsack we need to cover is soft. 

\section{Conclusions}\label{sec:conc}

In this paper, we have proposed a deterministic truthful FPTAS mechanism with 1+$\epsilon$ approximation ratio for a reverse auction of EDR in colocation data centers. We have developed an auction system and implemented a bidding decision tool for simulation experiments. The experimental results demonstrate the effectiveness of our methods. 
In future work, we study the more practical utility of our mechanism and the performance optimization with parallelization techniques for the dynamical programming. Besides, the provided technique allows us to deal with single minded agents, and both the size of energy and the cost are private information. Many open problems arise in the area of demand response. For example, agents are multi-minded. 
\section*{Acknowledgment}
This research was in part supported by NSFC (11671355, 61772466, U1836202, 60970125, 61272303), the “National Key R\&D Program of China”(2017YFB1401300, 2017YFB1401304), the Zhejiang Provincial Natural Science Foundation for Distinguished Young Scholars under No. LR19F020003, the Provincial Key Research and Development Program of Zhejiang, China under No. 2017C01055, and the Alibaba-ZJU Joint Research Institute of Frontier Technologies.

\bibliographystyle{IEEEtran}
\bibliography{md}

\begin{thebibliography}{10}
\providecommand{\url}[1]{#1}
\csname url@samestyle\endcsname
\providecommand{\newblock}{\relax}
\providecommand{\bibinfo}[2]{#2}
\providecommand{\BIBentrySTDinterwordspacing}{\spaceskip=0pt\relax}
\providecommand{\BIBentryALTinterwordstretchfactor}{4}
\providecommand{\BIBentryALTinterwordspacing}{\spaceskip=\fontdimen2\font plus
\BIBentryALTinterwordstretchfactor\fontdimen3\font minus
  \fontdimen4\font\relax}
\providecommand{\BIBforeignlanguage}[2]{{%
\expandafter\ifx\csname l@#1\endcsname\relax
\typeout{** WARNING: IEEEtran.bst: No hyphenation pattern has been}%
\typeout{** loaded for the language `#1'. Using the pattern for}%
\typeout{** the default language instead.}%
\else
\language=\csname l@#1\endcsname
\fi
#2}}
\providecommand{\BIBdecl}{\relax}
\BIBdecl

\bibitem{Kamyab2016Demand}
F.~Kamyab, M.~Amini, S.~Sheykhha, M.~Hasanpour, and M.~M. Jalali, ``Demand
  response program in smart grid using supply function bidding mechanism,''
  \emph{IEEE Transactions on Smart Grid}, vol.~7, no.~3, pp. 1277--1284, 2016.

\bibitem{Moghaddam2017On}
M.~H.~Y. Moghaddam, A.~Leon-Garcia, and M.~Moghaddassian, ``On the performance
  of distributed and cloud-based demand response in smart grid,'' \emph{IEEE
  Transactions on Smart Grid}, vol.~PP, no.~99, pp. 1--1, 2017.

\bibitem{Wei2017Energy}
W.~Wei, F.~Liu, and S.~Mei, ``Energy pricing and dispatch for smart grid
  retailers under demand response and market price uncertainty,'' \emph{IEEE
  Transactions on Smart Grid}, vol.~6, no.~3, pp. 1364--1374, 2017.

\bibitem{zhang2015infocom}
L.~Zhang, S.~Ren, C.~Wu, and Z.~Li, ``A truthful incentive mechanism for
  emergency demand response in colocation data centers,'' in \emph{Proc. of
  IEEE INFOCOM}, 2015.

\bibitem{Jacquot2018Demand}
P.~Jacquot, O.~Beaude, S.~Gaubert, and N.~Oudjane, ``Demand side management in
  the smart grid: An efficiency and fairness tradeoff,'' in \emph{IEEE Pes
  Innovative Smart Grid Technologies Conference Europe}, 2018, pp. 1--6.

\bibitem{zhou2015demand}
R.~Zhou, Z.~Li, C.~Wu, and M.~Chen, ``Demand response in smart grids: A
  randomized auction approach,'' \emph{IEEE Journal on Selected Areas in
  Communications}, vol.~33, no.~12, pp. 2540--2553, 2015.

\bibitem{vickrey1961counterspeculation}
W.~Vickrey, ``Counterspeculation, auctions, and competitive sealed tenders,''
  \emph{The Journal of finance}, vol.~16, no.~1, pp. 8--37, 1961.

\bibitem{clarke1971multipart}
E.~H. Clarke, ``Multipart pricing of public goods,'' \emph{Public choice},
  vol.~11, no.~1, pp. 17--33, 1971.

\bibitem{groves1973incentives}
T.~Groves, ``Incentives in teams,'' \emph{Econometrica: Journal of the
  Econometric Society}, pp. 617--631, 1973.

\bibitem{lavi2011truthful}
R.~Lavi and C.~Swamy, ``Truthful and near-optimal mechanism design via linear
  programming,'' \emph{Journal of the ACM (JACM)}, vol.~58, no.~6, p.~25, 2011.

\bibitem{archer2001truthful}
A.~Archer and E.~Tardos, ``Truthful mechanisms for one-parameter agents,'' in
  \emph{Proceedings of the 42nd IEEE symposium on Foundations of Computer
  Science}.\hskip 1em plus 0.5em minus 0.4em\relax IEEE Computer Society, 2001,
  pp. 482--491.

\bibitem{lehmann2002truth}
D.~Lehmann, L.~I. O{\'c}allaghan, and Y.~Shoham, ``Truth revelation in
  approximately efficient combinatorial auctions,'' \emph{Journal of the ACM
  (JACM)}, vol.~49, no.~5, pp. 577--602, 2002.

\bibitem{lawler1979fast}
E.~L. Lawler, ``Fast approximation algorithms for knapsack problems,''
  \emph{Mathematics of Operations Research}, vol.~4, no.~4, pp. 339--356, 1979.

\bibitem{tauhidul2009approximation}
I.~M. Tauhidul, ``Approximation algorithms for minimum knapsack problem,''
  \emph{Master's degree Thesis, university of lethbridge}, 2009.

\bibitem{briest2011approximation}
P.~Briest, P.~Krysta, and B.~V{\"o}cking, ``Approximation techniques for
  utilitarian mechanism design,'' \emph{SIAM Journal on Computing}, vol.~40,
  no.~6, pp. 1587--1622, 2011.

\bibitem{mu2008truthful}
A.~Mu'Alem and N.~Nisan, ``Truthful approximation mechanisms for restricted
  combinatorial auctions,'' \emph{Games and Economic Behavior}, vol.~64, no.~2,
  pp. 612--631, 2008.

\bibitem{Lin2013Dynamic}
M.~Lin, A.~Wierman, L.~L.~H. Andrew, and E.~Thereska, ``Dynamic right-sizing
  for power-proportional data centers,'' \emph{IEEE/ACM Transactions on
  Networking}, vol.~21, no.~5, pp. 1378--1391, 2013.

\bibitem{Vasudevan2009Safe}
V.~Vasudevan, A.~Phanishayee, H.~Shah, E.~Krevat, D.~G. Andersen, G.~R. Ganger,
  and Gibson, \emph{Safe and effective fine-grained TCP retransmissions for
  datacenter communication}.\hskip 1em plus 0.5em minus 0.4em\relax ACM, 2009.

\bibitem{ren14colocation}
S.~Ren and M.~Islam, ``Colocation demand response: Why do i turn off my
  servers?'' in \emph{Proc. of 11th International Conference on Autonomic
  Computing (ICAC 2014)}.\hskip 1em plus 0.5em minus 0.4em\relax USENIX
  Association, 2014.

\bibitem{chen2015greening}
N.~Chen, X.~Ren, S.~Ren, and A.~Wierman, ``Greening multi-tenant data center
  demand response,'' \emph{Performance Evaluation}, vol.~91, pp. 229--254,
  2015.

\bibitem{zhou2015smart}
Z.~Zhou, F.~Liu, Z.~Li, and H.~Jin, ``When smart grid meets geo-distributed
  cloud: An auction approach to datacenter demand response,'' in \emph{Proc. of
  IEEE INFOCOM}, 2015.

\bibitem{sunfair2015}
Q.~Sun, C.~Wu, S.~Ren, and Z.~Li, ``Fair rewarding in colocation data centers:
  Truthful mechanism for emergency demand response,'' in \emph{IEEE 23rd IWQoS
  2015}, 2015, pp. 363--372.

\bibitem{ahmed2015contract}
K.~Ahmed, M.~A. Islam, and S.~Ren, ``A contract design approach for colocation
  data center demand response,'' in \emph{Proceedings of the IEEE/ACM
  International Conference on Computer-Aided Design (ICCAD 2015)}.\hskip 1em
  plus 0.5em minus 0.4em\relax IEEE Press, 2015, pp. 635--640.

\bibitem{islam2015paying}
M.~Islam, H.~Mahmud, S.~Ren, X.~Wang \emph{et~al.}, ``Paying to save: Reducing
  cost of colocation data center via rewards,'' in \emph{Proc. of IEEE 21st
  International Symposium on High Performance Computer Architecture (HPCA
  2015)}.\hskip 1em plus 0.5em minus 0.4em\relax IEEE, 2015, pp. 235--245.

\bibitem{tran2015incentive}
N.~H. Tran, C.~T. Do, S.~Ren, Z.~Han, and C.~S. Hong, ``Incentive mechanisms
  for economic and emergency demand responses of colocation datacenters,''
  \emph{Selected Areas in Communications, IEEE Journal on}, vol.~33, no.~12,
  pp. 2892--2905, 2015.

\bibitem{dughmi2014black}
S.~Dughmi and T.~Roughgarden, ``Black-box randomized reductions in algorithmic
  mechanism design,'' \emph{SIAM Journal on Computing}, vol.~43, no.~1, pp.
  312--336, 2014.

\end{thebibliography}
\end{document}